\documentclass[12pt,letterpaper]{article}

\usepackage[margin=1in]{geometry}
\usepackage{setspace}
\usepackage{amsmath,amsthm,amssymb,mathtools}

\usepackage{booktabs}
\usepackage{array}
\usepackage{multirow}
\usepackage{tabularx}

\usepackage{graphicx}
\usepackage{tikz}
\usetikzlibrary{shapes.geometric, arrows.meta, positioning, calc}

\usepackage{enumitem}

\usepackage{algorithm}
\usepackage{algpseudocode}

\usepackage{natbib}
\usepackage{hyperref}
\hypersetup{
  colorlinks=true,
  linkcolor=blue!60!black,
  citecolor=blue!60!black,
  urlcolor=blue!60!black
}
\usepackage[capitalise,noabbrev]{cleveref}

\newcommand{\E}{\mathbb{E}}
\newcommand{\Var}{\mathrm{Var}}
\newcommand{\Cov}{\mathrm{Cov}}
\newcommand{\pS}{P_{\mathcal{S}}}
\newcommand{\pT}{P_{\mathcal{T}}}
\newcommand{\muS}{\mu_{\mathcal{S}}}
\newcommand{\muT}{\mu_{\mathcal{T}}}
\newcommand{\tauDR}{\hat{\tau}_{\mathrm{DR}}}
\newcommand{\tauT}{\tau_{\mathcal{T}}}
\newcommand{\tauS}{\tau_{\mathcal{S}}}
\newcommand{\ESS}{\mathrm{ESS}}
\newcommand{\SE}{\mathrm{SE}}
\newcommand{\cS}{\mathcal{S}}
\newcommand{\cT}{\mathcal{T}}
\newcommand{\cX}{\mathcal{X}}

\newcommand{\indep}{\perp\!\!\!\perp}
\newcommand{\Op}{O_p}
\newcommand{\Sx}{S_X}
\newcommand{\what}[1]{\widehat{#1}}

\DeclareMathOperator{\sign}{sign}
\DeclareMathOperator{\SpearmanCorr}{SpearmanCorr}

\newtheorem{theorem}{Theorem}[section]
\newtheorem{proposition}[theorem]{Proposition}

\newtheorem{corollary}[theorem]{Corollary}
\theoremstyle{definition}
\newtheorem{definition}[theorem]{Definition}
\newtheorem{assumption}{Assumption}

\theoremstyle{remark}
\newtheorem{remark}[theorem]{Remark}

\begin{document}

\title{When Can You Trust Your Synthetic Users?\\
  Diagnostics and Corrections for LLM Consumer Panels}

\author{Robson Tigre\thanks{Recast, email: \href{mailto:robson.tigre0@gmail.com}{robson.tigre0@gmail.com}, website: \url{https://www.robsontigre.com}} \\ Recast \and Hugo Gobato Souto\thanks{Dell, email: \href{mailto:hugo.souto@dell}{hugo.souto@dell}} \\ Dell}

\date{March 2026}

\maketitle

\begin{abstract}
\small\singlespacing\noindent
Large language models are increasingly deployed as synthetic consumer panels, promising 97\% cost reductions over traditional surveys. Yet aggregate validation metrics conceal systematic failures: variance compression, coefficient sign-flips, subgroup error balloons of 10--30 percentage points, and global corrections that worsen demographic bias. We provide a formal framework for deciding when to trust, correct, or abandon LLM-generated consumer data. The framework decomposes synthetic-panel bias into covariate and concept shift, develops testable diagnostics with interpretable decision thresholds, and supplies a doubly robust AIPW estimator requiring only a small calibration sample ($n = 50$--$300$). We validate on three testbeds. In controlled simulations the decision rule achieves 100\% accuracy (180/180 replications). On the American National Election Study with pre-existing LLM failures, it correctly flags heterogeneous concept shift and reduces naive bias by 92.9--99.6\%. On the Twin-2K-500 consumer pricing dataset (172,884 paired human and GPT-4.1-mini responses), it correctly routes full-sample estimation to Trust and subgroup targeting to Correct, with 83--94\% bias reduction.

\vspace{6pt}
\noindent\textbf{Keywords:} LLM synthetic panels, distribution shift diagnostics, doubly robust estimation, consumer research, prediction-powered inference
\end{abstract}

\section{Introduction}
\label{sec:introduction}

Firms are replacing six-figure consumer survey budgets with LLM-generated synthetic panels. Colgate-Palmolive and PyMC Labs have deployed LLM-based consumer testing on 57 surveys across 19 product categories, achieving a reported 97\% cost reduction \citep{maier2025ssr}. Marketing science journals are publishing validation studies \citep{brand2025llmsmarketresearch, arora2025hybridaihumanmarketing}, and consulting firms market ``synthetic consumer research'' as a standard offering. But beneath the impressive aggregate correlations lies a troubling pattern: the same methods that achieve $\rho = 0.90$ at the population level fail catastrophically on the subgroup estimates that drive business decisions.

\Cref{tab:evidence_failures} summarizes the evidence. The failures are not edge cases. They span political attitudes \citep{bisbee2024synthetic, argyle2023outofone}, consumer willingness-to-pay \citep{brand2025llmsmarketresearch}, purchase intent segmentation \citep{maier2025ssr}, and cross-national surveys \citep{taday2026persona}. Most alarmingly, the leading statistical correction method---PPI++ rectification---can make things \emph{worse}: \citet{krsteski2025validsurvey} show that population-level rectification increases female bias by 58.5\% and Non-Hispanic White bias by 44.7\%.

\begin{table}[htbp]
\centering
\caption{Evidence of systematic failures in LLM synthetic panels.}
\label{tab:evidence_failures}
\small
\begin{tabularx}{\textwidth}{>{\raggedright\arraybackslash}p{3cm} >{\raggedright\arraybackslash}p{3.5cm} X}
\toprule
\textbf{Evidence} & \textbf{Source} & \textbf{Key Finding} \\
\midrule
Variance compression + coefficient divergence & \citet{bisbee2024synthetic} & Variance compressed ${\sim}2\times$; 48\% of regression coefficients diverge significantly; 32\% sign-flip \\[4pt]
Subgroup collapse & \citet{argyle2023outofone} & $\rho = 0.90$--$0.94$ aggregate but $\rho \approx 0.02$ for political Independents (2020 wave) \\[4pt]
WTP overestimation & \citet{brand2025llmsmarketresearch} & $3\times$ WTP overestimates; sign flips; heterogeneity failure across demographics \\[4pt]
Segment accuracy near chance & \citet{maier2025ssr} & $\rho = 90.2\%$ aggregate but segment accuracy 67\% (chance = 50\%) \\[4pt]
Subgroup error balloons & \citet{morris2025limitssynthetic} & Synthetic samples approximate toplines within ${\sim}4$pp but subgroup error balloons to 10--30pp \\[4pt]
Persona conditioning fails & \citet{taday2026persona} & Near-zero aggregate improvement; subgroup degradation across 70,000+ instances \\[4pt]
Global correction worsens subgroups & \citet{krsteski2025validsurvey} & PPI++ at population level \emph{increases} female bias by $+58.5\%$ \\
\bottomrule
\end{tabularx}
\end{table}

A marketing team relying on aggregate correlations to justify a synthetic panel could make budget allocation, product launch, or pricing decisions that are directionally wrong for their most valuable customer segments. Yet no existing framework answers the question that every practitioner deploying synthetic panels must face: \emph{for my specific use case, with my specific customer base, can I trust these synthetic responses?}

Current practice is ``vibes-based'': papers report high aggregate correlations and declare success. The field lacks a formal framework connecting three disconnected literatures: LLM-as-synthetic-humans \citep{horton2023homosilicus, argyle2023outofone, maier2025ssr}, cautionary evidence documenting failures without unified diagnosis \citep{bisbee2024synthetic, santurkar2023whoseopinions, li2025personacatch, taday2026persona}, and distribution shift diagnostics with the statistical machinery to address the problem but never applied to LLM-generated data \citep{cai2025disde, jin2025beyondreweighting, degtiar2023generalizability, colnet2024combining}. A systematic review of 189 papers found zero formal statistical correction methods within the survey methodology literature \citep{buskirk2025moreparameters}. This paper sits at that intersection.

\subsection{Contributions}

We make three contributions, ordered by novelty:

\begin{enumerate}[label=\arabic*.]
  \item \textbf{A formal bias decomposition for LLM synthetic panels.} We decompose total estimation error into \emph{covariate shift} (the LLM's implicit population differs from the target), \emph{concept shift} (conditional response functions diverge even at matched demographics), and finite-sample noise (\Cref{sec:decomposition}). This adapts the DISDE framework \citep{cai2025disde} from ML model performance degradation to estimation bias in synthetic populations and explains all documented failure modes as special cases.

  \item \textbf{A testable diagnostic toolkit with decision thresholds.} We provide three diagnostics---covariate overlap, conditional calibration, and cross-LLM stability---each returning an interpretable metric with decision thresholds (\Cref{sec:diagnostics}). Two supplementary diagnostics (marginal calibration and sensitivity analysis) appear in Online Appendix~B.

  \item \textbf{A decision framework with doubly robust correction.} We synthesize diagnostics into a practitioner decision tree---Trust, Correct, or Walk Away---and provide an AIPW estimator that jointly corrects both sources of shift (\Cref{sec:correction,sec:decision}). The estimator is consistent if \emph{either} the outcome model or the propensity model is correctly specified, requires only a small calibration sample ($n = 50$--$300$), and uses cross-fitting to prevent the double-dipping that \citet{song2026demystifying} showed collapses coverage from 90\% to 47\%.
\end{enumerate}

\subsection{Preview of Empirical Results}

We validate across three settings (\Cref{sec:empirical}): (i)~controlled simulations (100\% decision accuracy, 180 replications); (ii)~ANES with pre-existing LLM failures \citep{bisbee2024synthetic}, where the framework correctly flags heterogeneous concept shift and the DR estimator reduces 28--33 point naive bias to 0.2--2.0 points (92.9--99.6\%); and (iii)~Twin-2K-500 consumer pricing \citep{toubia2025digitaltwin} (172,884 paired human--GPT-4.1-mini responses), where the framework correctly routes full-sample estimation to Trust ($\beta_g \in [0.80, 0.93]$) and subgroup targeting to Correct (bias reduction 83--94\%). A striking finding is that concept shift severity is task-specific, not purely demographic: as we show in \Cref{sec:empirical_twin2k}, the same LLM that achieves reliable calibration on product pricing ($\rho = 0.86$) fails on cognitive-bias tasks like the conjunction fallacy ($\rho = 0.39$) and anchoring ($\rho = 0.34$--$0.50$), because LLMs are ``hyper-rational'' where humans rely on heuristics \citep{peng2025funhouse}.

\section{Related Work}
\label{sec:related_work}

This paper enters three active conversations. We review each in turn, then position our contribution relative to the closest competitors.

\subsection{Can LLMs Replace Survey Respondents?}
\label{sec:rw_replace}

\citet{horton2023homosilicus} introduced the ``homo silicus'' concept---LLMs endowed with preferences and subjected to economic experiments---and demonstrated that LLM responses replicate classic findings in ultimatum games and labor supply. \citet{argyle2023outofone} scaled this idea through ``silicon sampling'': conditioning GPT-3 on real demographic backstories from the American National Election Study to produce synthetic response distributions achieving tetrachoric correlations of $0.90$--$0.94$. In the consumer domain, \citet{brand2025llmsmarketresearch} showed that conjoint-based willingness-to-pay estimates from GPT-3.5 often fall within 50--90\% of human values for familiar product attributes, and \citet{maier2025ssr} deployed LLM panels across 57 surveys achieving 90.2\% correlation attainment via a Semantic Similarity Rating (SSR) methodology.

These results, combined with cost reductions of approximately 97\% \citep{maier2025ssr}, have driven rapid commercial adoption \citep{arora2025hybridaihumanmarketing, toubia2025digitaltwin}, with silicon sampling constituting 34\% of all LLM-survey papers \citep{buskirk2025moreparameters}. We enter this conversation with a diagnostic answer: \emph{sometimes, and here is how to tell}.

\subsection{What Goes Wrong with LLM Synthetic Data?}
\label{sec:rw_wrong}

A growing body of evidence documents systematic failures along two dimensions.

\paragraph{Covariate shift.} LLMs generate synthetic consumers from an implicit training distribution that over-represents certain profiles. \citet{santurkar2023whoseopinions} show that LLM opinion distributions skew toward younger, more educated, more liberal, and more Western demographics---matching the crowdworker population used for RLHF training. \citet{li2025personacatch} demonstrate that this skew intensifies monotonically with more LLM involvement in persona generation: as personas become more LLM-generated (rather than sourced from real data), demographic fidelity degrades.

\paragraph{Concept shift.} Even when prompted with correct demographics, LLM conditional response functions diverge from real consumer behavior. \citet{bisbee2024synthetic} document 48\% coefficient divergence and 32\% sign-flips in regression models estimated from synthetic versus real ANES data. \citet{brand2025llmsmarketresearch} show $3\times$ WTP overestimates. \citet{kaiser2025aspire} report $d = 1.96$ over-positivity bias. \citet{taday2026persona} find that persona conditioning across 70,000+ WVS-7 instances yields near-zero aggregate improvement with heterogeneous subgroup degradation---Black respondents, farm owners, and religious minorities see \emph{worse} accuracy. In a comprehensive mega-study, \citet{peng2025funhouse} identify five systematic distortions in LLM digital twins---stereotyping, insufficient individuation, representation bias, ideological biases, and hyper-rationality---achieving only $r = 0.20$ individual-level correlation despite being trained on 500+ prior responses per person.

\paragraph{Interaction effects.} Critically, \citet{jin2025beyondreweighting} establish that covariate shift magnitude \emph{predicts} concept shift severity across 680 replication studies: populations where reweighting is most needed are precisely those where reweighting alone is most insufficient. All documented failures are special cases of the covariate shift + concept shift decomposition we formalize in \Cref{sec:decomposition}.

\subsection{How Should We Correct LLM Synthetic Data?}
\label{sec:rw_correct}

Statistical correction methods fall into five categories with progressive limitations:

\begin{enumerate}[label=(\roman*)]
  \item \textbf{Prompt engineering} (persona conditioning, SSR, RAG, fine-tuning): Improves prediction quality but provides no valid inference guarantees \citep{cho2024doppelganger, maier2025ssr, cao2025specializing}.

  \item \textbf{Importance reweighting}: Corrects covariate shift but assumes conditional responses are correct ($\muS(x) = \muT(x)$). When concept shift is present, importance-weighted estimates remain biased \citep{shimodaira2000covariateshift, hainmueller2012entropy}.

  \item \textbf{Prediction-powered inference (PPI)}: Corrects concept shift via a rectifier estimated on labeled data. \citet{angelopoulos2023ppi,angelopoulos2023ppiplusplus} introduce the framework; \citet{wang2025finetuningrectification} and \citet{krsteski2025validsurvey} apply it to LLM surveys. However, all PPI methods \emph{require} $\pS(X) = \pT(X)$---an assumption explicitly stated in proofs \citep[p.~7]{wang2025finetuningrectification} and maintained throughout all evaluations \citep[Appendix B.1]{krsteski2025validsurvey}.

  \item \textbf{Distribution shift diagnostics}: \citet{cai2025disde} provide the DISDE decomposition; \citet{jin2025beyondreweighting} bound concept shift from covariate shift; \citet{zhang2025datausefulness} introduce the Data Usefulness Coefficient; \citet{huang2025howmany} offer uncertainty quantification. These detect and decompose problems but produce no corrected estimates.

  \item \textbf{Doubly robust estimation}: \citet{robins1994aipw} introduce AIPW; \citet{dahabreh2019generalizing} apply it to transportability; \citet{chernozhukov2018dml} provide cross-fitting for honest ML-based inference. Our correction module inherits directly from the nonprobability sampling literature: \citet{chen2020doubly} establish DR estimation for combining nonprobability and probability samples, and \citet{elliott2017inference} provide the statistical framework for inference from such samples. LLM synthetic panels are, in effect, extreme nonprobability samples---generated without any probability-based selection mechanism---making this lineage both natural and, to our knowledge, previously unrecognized. This theoretical machinery has never been applied to LLM synthetic populations.
\end{enumerate}

We argue that diagnostics must precede correction: \citet{wang2025finetuningrectification}, \citet{krsteski2025validsurvey}, and \citet{zou2026generalizedppi} all jump to correction without diagnosing the nature of the shift, while \citet{guerdan2025drjudge} provide a DR estimator but assume no concept drift.

\subsection{Competitive Positioning}
\label{sec:rw_positioning}

\Cref{tab:competitive} maps the competitive landscape. The four-way intersection---concept shift, covariate shift, DR guarantees, and diagnostic framework---remains unoccupied.

\begin{table}[htbp]
\centering
\caption{Competitive positioning. Each column indicates whether the paper addresses the corresponding dimension. Only our paper occupies the four-way intersection.}
\label{tab:competitive}
\small
\begin{tabularx}{\textwidth}{>{\raggedright\arraybackslash}p{3.2cm} c c c c c >{\raggedright\arraybackslash}X}
\toprule
\textbf{Paper} & \rotatebox{55}{\textbf{Covariate}} & \rotatebox{55}{\textbf{Concept}} & \rotatebox{55}{\textbf{DR}} & \rotatebox{55}{\textbf{Diagnostics}} & \rotatebox{55}{\textbf{Consumer}} & \textbf{Key Limitation} \\
\midrule
\textbf{This paper} & \checkmark & \checkmark & \checkmark & \checkmark & \checkmark & --- \\[3pt]
Guerdan et al.\ (2025) & \checkmark & --- & \checkmark & --- & --- & Assumes no concept drift \\[3pt]
Wang et al.\ (2025) & --- & Partial & --- & --- & --- & Same-population assumption \\[3pt]
Krsteski et al.\ (2025) & --- & Partial & --- & --- & --- & Same-population assumption \\[3pt]
Zou et al.\ (2026) & \checkmark & --- & --- & --- & --- & Assumes $C \indep Y | X$ \\[3pt]
\citet{huang2025howmany} & \checkmark & \checkmark & --- & Partial & --- & No corrected estimates \\[3pt]
\citet{datta2025ppiipw} & --- & --- & --- & --- & --- & Same population \\[3pt]
Song et al.\ (2026) & --- & --- & Refs & Partial & --- & Tutorial, no implementation \\[3pt]
Cai et al.\ (2025) & \checkmark & \checkmark & --- & \checkmark & --- & ML performance, not estimation \\
\bottomrule
\end{tabularx}
\end{table}

\section{Bias Decomposition}
\label{sec:decomposition}

We now formalize the sources of error when using an LLM-generated panel to estimate a target population quantity. The decomposition adapts the DISDE framework \citep{cai2025disde} from diagnosing ML model performance degradation to diagnosing estimation bias in synthetic consumer panels. The adaptation is non-trivial: we re-interpret the ``loss'' as estimation error rather than prediction error, the ``training distribution'' as the LLM source population, and the ``target distribution'' as the firm's customer base.

\subsection{Setup and Notation}
\label{sec:setup}

\begin{definition}[Source and Target Populations]
\label{def:populations}
Let $\cS$ denote the \emph{source population}---the implicit population from which the LLM generates synthetic consumer profiles---with joint distribution $\pS(X, Y)$ over demographics $X \in \cX \subseteq \mathbb{R}^d$ and outcomes $Y \in \mathbb{R}$. Let $\cT$ denote the \emph{target population}---the firm's actual customer base---with joint distribution $\pT(X, Y)$.
\end{definition}

The researcher observes a large synthetic sample $\{(X_i^{\cS}, Y_i^{\cS})\}_{i=1}^N$ from $\pS$ (generated by querying the LLM) and a small real sample $\{(X_j^{\cT}, Y_j^{\cT})\}_{j=1}^n$ from $\pT$ (collected from actual customers), where typically $N \gg n$. Define the conditional response functions:
\begin{equation}
  \muS(x) \coloneqq \E_{\cS}[Y \mid X = x], \qquad
  \muT(x) \coloneqq \E_{\cT}[Y \mid X = x],
  \label{eq:conditional_means}
\end{equation}
and the target estimand:
\begin{equation}
  \tauT \coloneqq \E_{\cT}[Y] = \int \muT(x) \, p_{\cT}(x) \, dx.
  \label{eq:target_estimand}
\end{equation}

The naive synthetic estimator is $\hat{\tau}_{\cS} = N^{-1} \sum_{i=1}^N Y_i^{\cS}$, which converges to $\tauS = \E_{\cS}[Y] = \int \muS(x) \, p_{\cS}(x) \, dx$. The bias of this estimator is $\tauT - \tauS$.

We connect this setting to the external validity framework of \citet{egami2023externalvalidity}. Their X-validity (covariate distributions overlap across populations) corresponds to our covariate shift condition; their Y-validity (outcome models are stable across populations) corresponds to the absence of concept shift. The diagnostic toolkit in \Cref{sec:diagnostics} provides operational tests for each.

\subsection{Informal Decomposition}
\label{sec:informal_decomposition}

The bias of the naive synthetic estimator admits a simple additive decomposition:
\begin{align}
  \tauT - \tauS &= \int \muT(x) \, p_{\cT}(x) \, dx - \int \muS(x) \, p_{\cS}(x) \, dx \notag \\
  &= \underbrace{\int \bigl[p_{\cT}(x) - p_{\cS}(x)\bigr] \muT(x) \, dx}_{\text{Covariate shift bias}} + \underbrace{\int p_{\cS}(x) \bigl[\muT(x) - \muS(x)\bigr] dx}_{\text{Concept shift bias}} + \underbrace{\Op(N^{-1/2})}_{\text{Noise}}.
  \label{eq:informal_decomposition}
\end{align}

\begin{itemize}[leftmargin=2em]
  \item \textbf{Term 1 (Covariate shift):} $p_{\cS}(X) \neq p_{\cT}(X)$. The LLM generates synthetic consumers from its implicit training distribution, which over-represents younger, more educated, more liberal, more Western profiles \citep{santurkar2023whoseopinions}. RLHF amplifies crowdworker demographic skew \citep{li2025personacatch}. \emph{Fixable via reweighting.}

  \item \textbf{Term 2 (Concept shift):} $\muS(x) \neq \muT(x)$. Even when prompted with correct demographics, the LLM's conditional response function differs from real consumer behavior. \citet{bisbee2024synthetic} document 48\% coefficient divergence and 32\% sign-flips; \citet{brand2025llmsmarketresearch} show $3\times$ WTP overestimates. \emph{Not fixable by reweighting alone---requires outcome model calibration.}

  \item \textbf{Term 3 (Noise):} Sampling variability and stochastic LLM output variation (temperature, prompt sensitivity). \emph{Vanishes at rate $\Op(N^{-1/2})$.}
\end{itemize}

\noindent The population-level decomposition (Online Appendix~J) is exact; the third term in \eqref{eq:informal_decomposition} reflects finite-sample estimation noise that arises when moving from the population identity to its sample analog.

\subsection{Formal Decomposition and Assumptions}
\label{sec:formal_decomposition}

The informal decomposition in \eqref{eq:informal_decomposition} can be made rigorous via the DISDE telescoping sum \citep[Equation 2.1]{cai2025disde}, which introduces a shared intermediate distribution $p_{\Sx}(x) \propto p_{\cS}(x) p_{\cT}(x) / [p_{\cS}(x) + p_{\cT}(x)]$ to cleanly separate covariate and concept shift. The decomposition requires three standard assumptions: (i)~shared support ($\mathrm{supp}(\pT) \subseteq \mathrm{supp}(\pS)$, testable via Diagnostic~1), (ii)~bounded density ratios, and (iii)~finite second moments. Online Appendix~J provides the full formal decomposition, assumption statements, and estimation details.

The key takeaway is that concept shift---the middle term, measuring how differently the LLM responds compared to real consumers \emph{at the same demographics}---cannot be fixed by reweighting alone. When concept shift dominates, the practitioner needs outcome model calibration (our DR correction in \Cref{sec:correction}) or must walk away.

\section{Diagnostic Toolkit}
\label{sec:diagnostics}

We provide three core diagnostics in the main text and two supplementary diagnostics in Online Appendix~B. Each diagnostic returns an interpretable metric with decision thresholds summarized in \Cref{tab:thresholds}.

\subsection{Diagnostic 1: Covariate Overlap}
\label{sec:diag_overlap}

\paragraph{Question.} Are the firm's customer profiles represented in the synthetic panel?

\paragraph{Estimation.} Train a binary classifier to distinguish LLM-generated profiles from real customer profiles using observed demographics $X$. Compute the propensity score $e(x) = \what{P}(\text{source} = \text{LLM} \mid X = x)$. Convert to density ratio weights:
\begin{equation}
  w_i = \frac{1 - e(X_i^{\cS})}{e(X_i^{\cS})},
  \label{eq:density_ratio}
\end{equation}
\noindent In practice, weights are self-normalized: $\tilde{w}_i = w_i / \sum_{i'} w_{i'}$, avoiding dependence on $N$ and $n$.

Calculate the effective sample size:
\begin{equation}
  \ESS = \frac{\bigl(\sum_{i=1}^N w_i\bigr)^2}{\sum_{i=1}^N w_i^2}.
  \label{eq:ess}
\end{equation}

\paragraph{Decision thresholds.}
\begin{itemize}[leftmargin=2em]
  \item $\ESS / N > 0.10$: \textbf{Adequate overlap.} Proceed to conditional calibration.
  \item $\ESS / N \in [0.01, 0.10]$: \textbf{Marginal overlap.} Reweighting is possible but high-variance. Proceed with caution.
  \item $\ESS / N < 0.01$: \textbf{Poor overlap. Walk Away.} The LLM's implicit population does not cover the firm's customer profiles. Reweighting produces extreme weights and unstable estimates.
\end{itemize}

\paragraph{Interpretation.} The ESS measures how many ``effective'' observations remain after reweighting the synthetic panel to match the target demographics. An ESS of 100 from a panel of 10,000 ($\ESS/N = 0.01$) means that most of the information concentrates on a few observations with extreme weights---the reweighted estimate is effectively based on 100 observations, not 10,000. The density ratio weights $w_i$ from \eqref{eq:density_ratio} are the same weights used in the DR estimator (\Cref{sec:correction}), so a low ESS directly implies high variance in the corrected estimate.

\paragraph{AUC failsafe.} ESS becomes unreliable at extremes of distributional separation: near-zero overlap can paradoxically inflate ESS. We supplement with the domain classifier AUC, which increases monotonically with separation. If AUC~$> 0.95$, the diagnostic triggers walk away regardless of ESS; if AUC~$< 0.90$ but ESS/N~$< 0.01$, low ESS reflects weight concentration rather than genuine non-overlap. \Cref{fig:overlap_scenarios} illustrates the three regimes.

\begin{figure}[htbp]
\centering
\includegraphics[width=\textwidth]{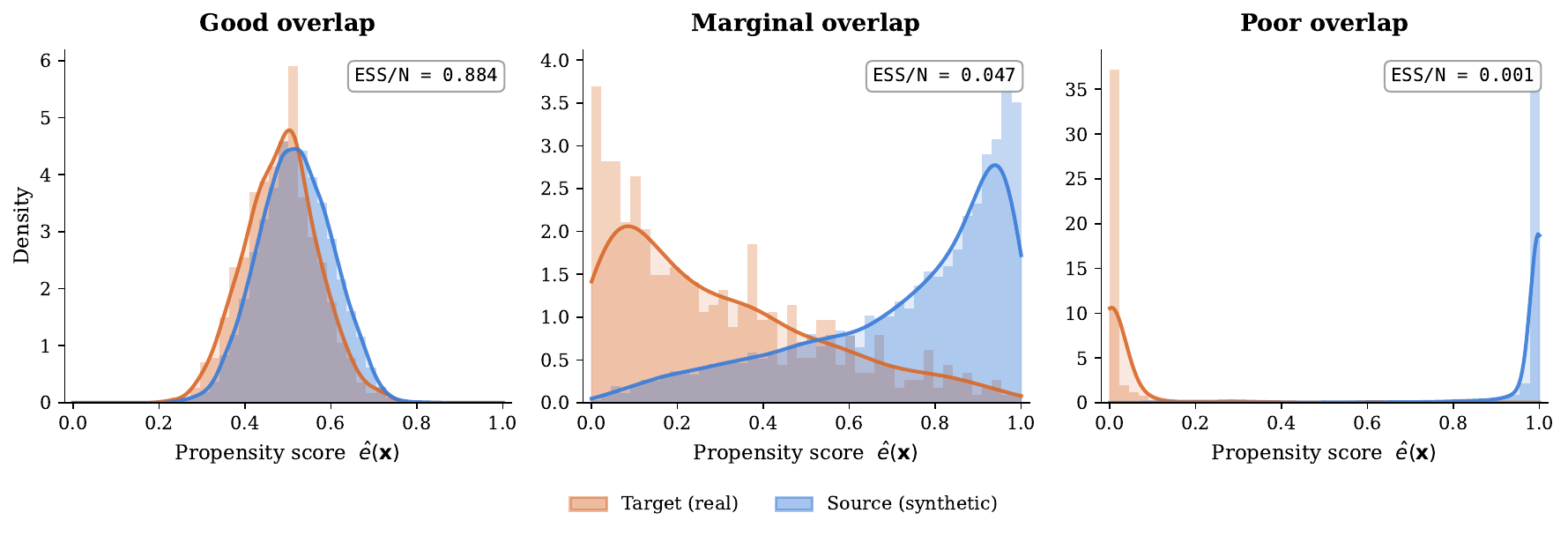}
\caption{Propensity score distributions under three covariate shift regimes from the controlled DGP. \textbf{Left:} Good overlap ($\ESS/N = 0.884$)---source and target populations are nearly indistinguishable, reweighting is reliable. \textbf{Center:} Marginal overlap ($\ESS/N = 0.047$)---moderate separation, reweighting is feasible but high-variance. \textbf{Right:} Poor overlap ($\ESS/N = 0.001$)---severe separation, nearly all importance weight concentrates on a few observations, making reweighted estimates unreliable.}
\label{fig:overlap_scenarios}
\end{figure}

\subsection{Diagnostic 2: Conditional Calibration}
\label{sec:diag_calibration}

\paragraph{Question.} Conditional on demographics, do synthetic responses track real responses?

\paragraph{Estimation.} Using the small real sample, compute subgroup-level calibration slopes. For each major demographic subgroup $g \in \mathcal{G}$ (e.g., age brackets, gender, income quartiles):
\begin{equation}
  \beta_g = \frac{\Cov(Y_{\text{real}}, Y_{\text{synthetic}} \mid \text{group} = g)}{\Var(Y_{\text{synthetic}} \mid \text{group} = g)}.
  \label{eq:calibration_slope}
\end{equation}

A perfectly calibrated synthetic panel yields $\beta_g = 1$ for all $g$. Systematic deviations indicate concept shift.

\paragraph{Decision thresholds.}
\begin{itemize}[leftmargin=2em]
  \item $\beta_g \in [0.5, 1.5]$ for all major subgroups and same sign as $\beta_{\text{overall}}$: \textbf{Moderate concept shift. Correct using DR} (\Cref{sec:correction}).
  \item $\beta_g \notin [0.5, 1.5]$ for any subgroup, or $\sign(\beta_g) \neq \sign(\beta_{\text{overall}})$ for any subgroup: \textbf{Severe concept shift. Walk Away} for that subgroup.
\end{itemize}

\paragraph{Connection to Jin et al.} \citet{jin2025beyondreweighting} demonstrate that covariate shift magnitude predicts concept shift severity: $|\hat{h}_{Y|X}| / |\hat{h}_X| \leq 1$ with high probability across 680 studies. This means Diagnostic~1 serves as an \emph{early warning} for Diagnostic~2: if covariate overlap is poor, conditional calibration failure is likely. We test this relationship empirically in \Cref{sec:empirical_anes}.

\begin{remark}[Practical subgroup definition]
Subgroups $\mathcal{G}$ should correspond to the demographic partitions relevant for the firm's decision. A CPG company segmenting by age and income defines $\mathcal{G}$ accordingly. The diagnostic is conservative: it flags failure if \emph{any} decision-relevant subgroup fails calibration.
\end{remark}

\subsection{Diagnostic 3: Cross-LLM Stability}
\label{sec:diag_stability}

\paragraph{Question.} Do different LLMs produce substantively similar conclusions?

\paragraph{Estimation.} Generate synthetic panels from $K \geq 2$ different LLMs (e.g., GPT-4o, Claude, Gemini) using identical prompts and demographic profiles. Compute pairwise rank correlations of subgroup-level estimates:
\begin{equation}
  \rho_{jk} = \SpearmanCorr\bigl(\hat{\tau}_{j,g}, \hat{\tau}_{k,g}\bigr) \quad \text{for LLMs } j, k \text{ across subgroups } g \in \mathcal{G}.
  \label{eq:cross_llm}
\end{equation}

\paragraph{Decision thresholds.}
\begin{itemize}[leftmargin=2em]
  \item $\min_{j,k} \rho_{jk} > 0.80$: \textbf{Models agree. Confidence boost.}
  \item $\min_{j,k} \rho_{jk} \in [0.50, 0.80]$: \textbf{Partial agreement.} Flag concept shift risk; proceed to conditional calibration.
  \item $\min_{j,k} \rho_{jk} < 0.50$: \textbf{Models disagree fundamentally. Walk Away.} Responses are model-specific artifacts, not population signals.
\end{itemize}

\paragraph{Rationale.} If different LLMs---trained on different data, with different RLHF procedures---produce the same subgroup patterns, the patterns likely reflect genuine population structure rather than model-specific artifacts. Disagreement signals concept shift.

\begin{remark}[Cost]
Cross-LLM stability requires only API calls to additional LLMs (no real data), making it the cheapest diagnostic after covariate overlap. It can be run before collecting any real-sample data.
\end{remark}

\subsection{Summary of Decision Thresholds}

\begin{table}[htbp]
\centering
\caption{Diagnostic decision thresholds (operational defaults). Each diagnostic returns an interpretable metric that maps to one of three actions. Threshold sensitivity is examined in Online Appendix~C.5.}
\label{tab:thresholds}
\begin{tabular}{l l c c c}
\toprule
\textbf{Diagnostic} & \textbf{Metric} & \textbf{Proceed} & \textbf{Caution / Correct} & \textbf{Walk Away} \\
\midrule
Covariate overlap & $\ESS/N$ & $> 0.10$ & $[0.01, 0.10]$ & $< 0.01$ \\[3pt]
Conditional calibration & $\beta_g$ & $[0.5, 1.5]$, same sign & Outside range & Sign flip \\[3pt]
Cross-LLM stability & $\min(\rho_{jk})$ & $> 0.80$ & $[0.50, 0.80]$ & $< 0.50$ \\
\bottomrule
\end{tabular}
\end{table}

\section{Doubly Robust Correction Module}
\label{sec:correction}

When the diagnostics in \Cref{sec:diagnostics} return ``Correct''---adequate covariate overlap plus moderate concept shift---the practitioner applies the doubly robust AIPW estimator described in this section. The estimator combines propensity reweighting (correcting \emph{who} is in the panel) with outcome calibration (correcting \emph{how} the panel responds), inheriting double robustness from the classical AIPW framework \citep{robins1994aipw} and honest inference from DML-style cross-fitting \citep{chernozhukov2018dml}.

\subsection{Setup}
\label{sec:correction_setup}

The researcher has:
\begin{itemize}[leftmargin=2em]
  \item A large synthetic sample $\{(X_i^{\cS}, Y_i^{\cS})\}_{i=1}^N$ from $\pS$ (the LLM panel).
  \item A small real sample $\{(X_j^{\cT}, Y_j^{\cT})\}_{j=1}^n$ from $\pT$ (the firm's customers).
\end{itemize}

We estimate two nuisance functions:
\begin{itemize}[leftmargin=2em]
  \item \textbf{Outcome calibration model:} $\hat{g}(x) = \what{\E}[Y^{\cT} \mid X = x] - \what{\E}[Y^{\cS} \mid X = x]$, the estimated concept shift at demographics $x$.
  \item \textbf{Propensity model:} $\hat{w}(x) = \what{p}_{\cT}(x) / \what{p}_{\cS}(x)$, the estimated density ratio correcting covariate shift.
\end{itemize}

\subsection{The AIPW Estimator}
\label{sec:aipw}

\begin{definition}[DR Estimator]
\label{def:dr_estimator}
The doubly robust estimator for the target population mean $\tauT$, using the standard AIPW form for combining nonprobability and probability samples \citep{chen2020doubly,elliott2017inference}, is:
\begin{equation}
\boxed{
  \tauDR = \underbrace{\frac{1}{n} \sum_{j=1}^{n} Y_j^{\cT}}_{\text{Real-sample baseline}}
  + \underbrace{\frac{1}{N} \sum_{i=1}^{N} \hat{w}(X_i^{\cS}) \bigl[Y_i^{\cS} + \hat{g}(X_i^{\cS}) - \hat{\mu}_{\cT}(X_i^{\cS})\bigr]}_{\text{Augmentation from synthetic panel}},
}
  \label{eq:dr_estimator}
\end{equation}
where $\hat{\mu}_{\cT}(x) = \hat{\mu}_{\cS}(x) + \hat{g}(x)$ is the predicted target outcome at $x$ and $\hat{\mu}_{\cS}(x) = \what{\E}[Y^{\cS} \mid X = x]$.
\end{definition}

The first term is the real-sample baseline---consistent for $\tauT$ on its own but potentially imprecise when $n$ is small. The second term augments it using the large synthetic panel: it adds the weighted residual between concept-shift-corrected synthetic outcomes and predicted target outcomes evaluated at synthetic demographics. When $\hat{g}$ is correctly specified, the bracketed term reduces to $Y_i^{\cS} - \hat{\mu}_{\cS}(X_i^{\cS})$---pure noise with mean zero regardless of $\hat{w}$. When $\hat{w}$ is correctly specified, the weighted residuals converge to zero and the real-sample baseline alone gives consistency.

\subsection{Theoretical Properties}

\begin{theorem}[Double Robustness]
\label{thm:double_robustness}
Under the regularity assumptions in Online Appendix~J (shared support, bounded density ratios, finite second moments, independent generation) and the maintained hypothesis that $\hat{\mu}_{\cS}$ is consistent for $\muS$ (assured when $N \gg n$), the estimator $\tauDR$ in \eqref{eq:dr_estimator} is consistent for $\tauT$ if \emph{either}:
\begin{enumerate}[label=(\alph*)]
  \item the concept shift model $\hat{g}$ is consistent for $g(x) = \muT(x) - \muS(x)$, \emph{or}
  \item the propensity model $\hat{w}$ is consistent for $w(x) = p_{\cT}(x) / p_{\cS}(x)$,
\end{enumerate}
but not necessarily both.
\end{theorem}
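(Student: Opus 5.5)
The plan starts from an algebraic simplification of \eqref{eq:dr_estimator}. By \cref{def:dr_estimator}, $\hat{\mu}_{\cT}(x) = \hat{\mu}_{\cS}(x) + \hat{g}(x)$, so the bracketed augmentation term collapses:
\begin{equation*}
  Y_i^{\cS} + \hat{g}(X_i^{\cS}) - \hat{\mu}_{\cT}(X_i^{\cS}) = Y_i^{\cS} - \hat{\mu}_{\cS}(X_i^{\cS}).
\end{equation*}
Hence
\begin{equation*}
  \tauDR = \bar{Y}^{\cT} + A_N, \qquad A_N \coloneqq \frac{1}{N}\sum_{i=1}^N \hat{w}(X_i^{\cS})\bigl[Y_i^{\cS} - \hat{\mu}_{\cS}(X_i^{\cS})\bigr].
\end{equation*}
The first term is consistent for $\tauT$ by the weak law of large numbers, using finite second moments under $\pT$. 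It therefore suffices to show $A_N \to 0$ in probability.

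I will show this under the maintained hypothesis on $\hat{\mu}_{\cS}$ together with boundedness of $\hat{w}$. This covers both case (a) and case (b) at once: neither $\hat{g}$ nor the correctness of $\hat{w}$ enters beyond boundedness. For case (b), boundedness follows from consistency for $w$ plus the bounded-density-ratio assumption. In case (a) I will assume, as part of the regularity conditions, that the fitted weights are bounded in probability or clipped at some constant $B$.

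The next step is to split $A_N = A_N^{(1)} + A_N^{(2)}$, where
\begin{equation*}
  A_N^{(1)} = \frac{1}{N}\sum_i \hat{w}(X_i^{\cS})\bigl[Y_i^{\cS} - \muS(X_i^{\cS})\bigr], \qquad A_N^{(2)} = \frac{1}{N}\sum_i \hat{w}(X_i^{\cS})\bigl[\muS(X_i^{\cS}) - \hat{\mu}_{\cS}(X_i^{\cS})\bigr].
\end{equation*}
For $A_N^{(1)}$ I use the cross-fitting structure: on each fold, $\hat{w}$ is fit on data independent of the evaluation observations (independent generation of the two samples). Conditional on the nuisance fits, each summand then has mean zero because $\E_{\cS}[Y - \muS(X)\mid X]=0$. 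Its conditional variance is at most $B^2\E_{\cS}[\Var_{\cS}(Y\mid X)]/N_k$, where $N_k$ is the fold size. Chebyshev's inequality conditionally, followed by dominated convergence, gives $A_N^{(1)} = \Op(N^{-1/2})$. For $A_N^{(2)}$, Cauchy--Schwarz gives $|A_N^{(2)}| \le B\,\|\hat{\mu}_{\cS} - \muS\|_{L_2(\mathbb{P}_N)}$. Under cross-fitting, the empirical norm has conditional expectation equal to the population $L_2(\pS)$ error, which tends to zero by the maintained consistency of $\hat{\mu}_{\cS}$. Markov's inequality then gives $A_N^{(2)} = o_p(1)$.

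Combining the pieces, $\tauDR = \tauT + o_p(1)$ whichever of (a) or (b) holds. In fact the argument shows consistency needs only the maintained hypothesis, so the ``either/or'' in the statement holds a fortiori. I expect the main obstacle to be the dependence between the estimated nuisances and the evaluation sample. Without sample splitting, $A_N^{(1)}$ is no longer a mean-zero average, and the empirical $L_2$ norm in $A_N^{(2)}$ need not track the population error. I would handle this first via $K$-fold cross-fitting as in \citet{chernozhukov2018dml}. If cross-fitting is not assumed, the fallback is a Donsker or Glivenko--Cantelli condition on the classes containing $\hat{w}$ and $\hat{\mu}_{\cS}$. A secondary point is making the boundedness of $\hat{w}$ explicit in case (a), since there no consistency of $\hat{w}$ is available to inherit the density-ratio bound.
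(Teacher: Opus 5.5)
Your proof is correct, and it is more direct than the paper's. The paper's formal argument (Online Appendix~A) is a reduction. It identifies the synthetic panel with the nonprobability sample and the calibration survey with the probability sample. It then asserts that the assumptions of Online Appendix~J imply the conditions of \citet[Theorem~2]{chen2020doubly}, and keeps the (a)/(b) framing of the main-text sketch. The cancellation $Y_i^{\cS} + \hat{g}(X_i^{\cS}) - \hat{\mu}_{\cT}(X_i^{\cS}) = Y_i^{\cS} - \hat{\mu}_{\cS}(X_i^{\cS})$ appears there only as a ``computational simplification'' remark.

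You instead make that cancellation the whole proof. You write $\tauDR = \bar{Y}^{\cT} + A_N$, get the baseline from the law of large numbers, and show $A_N = o_p(1)$ from $L_2(\pS)$-consistency of $\hat{\mu}_{\cS}$ and bounded weights, with cross-fitting decoupling the nuisance fits from the evaluation points.

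The citation buys brevity and a link to the survey-sampling literature, but the mapping is loose. The Chen et al.\ estimator averages the outcome model over a probability sample that records covariates only. It also takes the nonprobability sample's outcomes to be the target outcomes. Neither holds here: the real sample has outcomes, and the synthetic outcomes carry concept shift.

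Your route is self-contained and makes explicit what the paper leaves implicit:
\begin{itemize}
\item $n \to \infty$ is needed for the baseline;
\item consistency of $\hat{\mu}_{\cS}$ must hold in $L_2$, not merely pointwise;
\item sample splitting (or a Glivenko--Cantelli-type condition) is needed;
\item a bound on $\hat{w}$ is needed in case~(a), which the sketch's ``mean zero regardless of $\hat{w}$'' silently requires.
\end{itemize}
It also gives the sharper conclusion that neither (a) nor (b) is ever used. The ``double'' robustness comes entirely from $\bar{Y}^{\cT}$ plus the maintained hypothesis on $\hat{\mu}_{\cS}$.

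One small tightening: in case~(b), consistency of $\hat{w}$ alone does not give a uniform bound. Clipping at $M$ does, and it preserves consistency because $w \le M$.
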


\begin{proof}[Proof sketch]
The estimator is a doubly robust AIPW for combining nonprobability and probability samples \citep{chen2020doubly,elliott2017inference}. Under case~(a), the bracketed augmentation term reduces to $Y_i^{\cS} - \muS(X_i^{\cS})$, which has mean zero regardless of $\hat{w}$, and the real-sample baseline gives $\tauT$. Under case~(b), the importance-weighted augmentation converges to zero and the baseline again gives $\tauT$. Full mapping to \citet[Theorem~2]{chen2020doubly} in Online Appendix~A.
\end{proof}

\begin{proposition}[Semiparametric Efficiency]
\label{prop:efficiency}
When both $\hat{g}$ and $\hat{w}$ are correctly specified and converge at rate $o_p(n^{-1/4})$, $\tauDR$ achieves the semiparametric efficiency bound:
\begin{equation}
  \sqrt{n}(\tauDR - \tauT) \xrightarrow{d} \mathcal{N}\bigl(0, \, V^*\bigr),
  \label{eq:efficiency}
\end{equation}
where $V^*$ is the efficiency bound for estimating $\tauT$ from the combined data.
\end{proposition}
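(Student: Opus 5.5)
The plan is a standard cross-fitted one. First write $\tauDR - \tauT$ as a sample average of an ``oracle'' influence function plus remainder terms. Then show the remainders are $o_p(n^{-1/2})$ under the stated $o_p(n^{-1/4})$ rates. Finally, verify that the oracle influence function coincides with the efficient influence function of $\tauT$ in the semiparametric model generated by the two independent samples, so its variance equals $V^*$.

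I will work in the asymptotic regime $n, N \to \infty$ with $n/N \to \lambda \in [0,\infty)$, which covers the intended case $N \gg n$ ($\lambda = 0$). Nuisances $\hat{g}, \hat{w}, \hat{\mu}_{\cS}$ are fit on folds disjoint from the observations at which they are evaluated, as in \citet{chernozhukov2018dml}.

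\textbf{Step 1 (linearization).} Add and subtract the true nuisances $g, w, \muS$ in \eqref{eq:dr_estimator}. Using $Y^{\cS} + \hat{g} - \hat{\mu}_{\cT} = Y^{\cS} - \hat{\mu}_{\cS}$, this gives
\begin{equation*}
\tauDR - \tauT = \frac{1}{n}\sum_{j}(Y_j^{\cT} - \tauT) + \frac{1}{N}\sum_i w(X_i^{\cS})\bigl[Y_i^{\cS} - \muS(X_i^{\cS})\bigr] + R_1 + R_2 ,
\end{equation*}
where the remainders are defined as follows.
\begin{itemize}
\item $R_1$ is the centered empirical-process term built from $(\hat{w} - w)(Y^{\cS} - \muS)$ and $w(\muS - \hat{\mu}_{\cS})$.
\item $R_2 = \E_{\cS}\bigl[(\hat{w} - w)(X)(\muS - \hat{\mu}_{\cS})(X)\bigr]$ is the second-order bias term.
\end{itemize}
Note that $w(\muS - \hat\mu_{\cS})$ integrates against $p_{\cS}$ to $\int p_{\cT}(\muS - \hat\mu_{\cS})$. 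This contribution must cancel against the corresponding target-side correction built into the estimator. I will check the exact form of this cancellation carefully, since the displayed estimator has no $\hat{\mu}_{\cT}$ term averaged over the real sample.

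\textbf{Step 2 (remainders).} Conditional on the training folds, $R_1$ is a mean-zero average of i.i.d.\ terms. Its conditional variance is bounded by $\|\hat{w} - w\|_2^2 + \|\hat{\mu}_{\cS} - \muS\|_2^2 = o_p(1)$, using bounded density ratios and finite second moments. Chebyshev's inequality then gives $R_1 = o_p(N^{-1/2}) = o_p(n^{-1/2})$. For $R_2$, Cauchy--Schwarz gives the product bound $\|\hat{w} - w\|_2 \, \|\hat{\mu}_{\cS} - \muS\|_2 = o_p(n^{-1/2})$.

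\textbf{Step 3 (CLT and efficiency).} By independence of the two samples and the Lindeberg CLT,
\begin{equation*}
\sqrt{n}(\tauDR - \tauT) \xrightarrow{d} \mathcal{N}\bigl(0, \Var_{\cT}(Y) + \lambda \, \E_{\cS}[w^2(X)\Var_{\cS}(Y \mid X)]\bigr).
\end{equation*}
To show that this variance equals $V^*$, I will compute the efficient influence function by parametrizing smooth submodels of $(\pS, \pT)$, differentiating $\tauT$ along them, and projecting onto the tangent space of the joint model.

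\textbf{Main obstacle.} The main obstacle is Step 3: specifying the model so that the claimed bound is actually attained. In the fully nonparametric model where $\pS$ and $\pT$ are unrelated, the synthetic sample carries no information about $\tauT$. The efficient influence function is then $Y^{\cT} - \tauT$ and $V^* = \Var_{\cT}(Y)$, so attainment holds only when $\lambda = 0$. Any genuine efficiency gain from the synthetic panel requires a restriction linking the two populations, such as a structured or smooth concept shift $g$. I would first prove the proposition under the nonparametric model with $n/N \to 0$. I would then extend it by deriving $V^*$ under the restricted model, following the nonprobability-sample efficiency calculation in \citet{chen2020doubly}, and checking that the influence function from Step 1 lies in that model's tangent space.
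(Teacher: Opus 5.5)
The gap is where you defer the ``cancellation'' at the end of Step~1: there is nothing for it to cancel against. Because $\hat\mu_{\cT} = \hat\mu_{\cS} + \hat g$, the estimator is identically $\frac1n\sum_j Y_j^{\cT} + \frac1N\sum_i \hat w(X_i^{\cS})\{Y_i^{\cS} - \hat\mu_{\cS}(X_i^{\cS})\}$. The paper's own remark in Appendix~A concedes that $\hat g$ cancels. So the hypothesis on $\hat g$ is vacuous, and no term evaluates $\hat\mu_{\cS}$ on the real sample.

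Conditional on the training folds, the augmentation has mean $-\E_{\cT}[\hat\mu_{\cS} - \muS] + \E_{\cS}[(\hat w - w)(\muS - \hat\mu_{\cS})]$. Your $R_2$ is the second piece; the first is \emph{linear} in the nuisance error. The score is orthogonal in $w$. But perturbing $\muS$ to $\muS + th$ moves $\E_{\cS}[w\{Y - \muS\}]$ at rate $-\E_{\cS}[wh] = -\E_{\cT}[h] \neq 0$. So Neyman orthogonality fails in $\muS$, and no product-rate argument controls this term: neither your Step~2 nor the DML theorem the paper's proof cites.

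How the term behaves depends on the particular $\hat\mu_{\cS}$. For correctly specified OLS, $\E_{\cT}[\hat\mu_{\cS} - \muS]$ is asymptotically linear with influence function $(\Pi w)(X)\{Y - \muS(X)\}$, where $\Pi$ is the $L_2(\pS)$ projection onto the regressors. After $\sqrt{n}$ scaling it is of order $\sqrt{n/N}$, hence non-negligible for $\lambda > 0$. Your Step~3 variance then becomes $\Var_{\cT}(Y) + \lambda\,\E_{\cS}[(w - \Pi w)^2 \Var_{\cS}(Y \mid X)]$. For a flexible learner the term is of the order of the regression's smoothing bias, which need not be $o_p(n^{-1/2})$.

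What you can actually prove is the degenerate case you isolate. Add a hypothesis such as $\sqrt{n}\,\|\hat\mu_{\cS} - \muS\|_{L_2(\pS)} = o_p(1)$, and take $n/N \to 0$. This bounds the term, because bounded density ratios give $|\E_{\cT}[h]| \le M^{1/2}\|h\|_{L_2(\pS)}$. Steps~1--3 then give $\mathcal{N}(0, \Var_{\cT}(Y))$, the bound in the unrestricted two-sample model. But this holds only because $\tauDR$ is then first-order equivalent to the real-sample mean; the synthetic panel contributes nothing.

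Your planned extension to a restricted model cannot improve on this. $\tauDR$ never uses $\hat g$, so it cannot exploit a restriction on $g$. For instance, under $g \equiv 0$ with $\lambda = 0$ the bound drops to $\Var_{\cT}(\muT(X))$. That is strictly below $\Var_{\cT}(Y)$ whenever $\E_{\cT}[\Var_{\cT}(Y \mid X)] > 0$, so the tangent-space check will fail.

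Finally, the paper's $\phi^* = [Y^{\cT} - \tauT] + w(X)[Y^{\cS} + g(X) - \muT(X)]$ is exactly your Step~1 influence function, since $Y^{\cS} + g - \muT = Y^{\cS} - \muS$. Its claim $V^* = \E[\phi^{*2}]$ is asserted rather than derived. So following the paper's appeal to \citet{chernozhukov2018dml} would not close the gap. In the unrestricted model the efficient influence function is unique, namely $Y^{\cT} - \tauT$ with no synthetic component. Hence the bound is attained only when the augmentation is asymptotically negligible.
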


\begin{proof}
Standard result from the DML framework \citep{chernozhukov2018dml}. The cross-fitting procedure (\Cref{sec:crossfitting}) ensures the $o_p(n^{-1/4})$ product-bias condition is met when nuisance estimators are consistent. Full argument in Online Appendix~A.
\end{proof}

\subsection{Cross-Fitting}
\label{sec:crossfitting}

To prevent the ``double-dipping'' bias documented by \citet{song2026demystifying}---who show that using the same data for nuisance estimation and inference collapses coverage from 90\% to 47\%---we use $K$-fold cross-fitting.

\begin{algorithm}[htbp]
\caption{$K$-fold cross-fitted DR estimation}
\label{alg:crossfitting}
\begin{algorithmic}[1]
\Require Synthetic sample $\{(X_i^{\cS}, Y_i^{\cS})\}_{i=1}^N$; real sample $\{(X_j^{\cT}, Y_j^{\cT})\}_{j=1}^n$; folds $K = 5$
\Ensure DR estimate $\tauDR$ with standard error $\widehat{\mathrm{SE}}$

\State Randomly partition $\{1, \ldots, N+n\}$ into $K$ equal folds $\mathcal{I}_1, \ldots, \mathcal{I}_K$
\For{$k = 1, \ldots, K$}
  \State Let $\mathcal{I}_{-k} = \{1, \ldots, N+n\} \setminus \mathcal{I}_k$ (training set)
  \State Estimate $\hat{g}_{-k}(\cdot)$ on $\mathcal{I}_{-k}$: fit outcome model for concept shift
  \State Estimate $\hat{w}_{-k}(\cdot)$ on $\mathcal{I}_{-k}$: fit domain classifier for density ratios
  \State Trim weights: $\hat{w}_{-k}(x) \leftarrow \min\bigl(\hat{w}_{-k}(x), \, w_{99}\bigr)$ at 99th percentile
  \State Compute fold-specific DR score $\hat{\psi}_k$ using \eqref{eq:dr_estimator} on $\mathcal{I}_k$ with models from $\mathcal{I}_{-k}$
\EndFor
\State $\tauDR \leftarrow K^{-1} \sum_{k=1}^K \hat{\psi}_k$
\State $\widehat{\mathrm{SE}} \leftarrow$ standard deviation of $\{\hat{\psi}_k\}_{k=1}^K / \sqrt{K}$
\State \Return{$\tauDR, \; \widehat{\mathrm{SE}}$}
\end{algorithmic}
\end{algorithm}

\paragraph{Data availability scenarios.} The DR correction adapts to three data regimes: (i)~rich CRM plus small survey ($n \approx 200$--$300$), enabling flexible ML nuisance models; (ii)~survey-only ($n \approx 50$--$100$), requiring parametric calibration; and (iii)~aggregate benchmarks only, limiting correction to IPW via entropy balancing \citep{hainmueller2012entropy}. Online Appendix~H provides full details.

\paragraph{Consumer estimands.} The DR correction extends beyond population means to product concept ranking (with simultaneous inference across $K$ concepts), segment-level treatment effects (DR-CATE with segment-specific diagnostics), and WTP/demand curves (level correction while preserving shape). Online Appendix~G provides details.

\section{Decision Framework: Trust, Correct, or Walk Away}
\label{sec:decision}

We synthesize the diagnostics from \Cref{sec:diagnostics} and the correction from \Cref{sec:correction} into a practitioner-oriented decision framework. The pipeline is ordered from cheapest to most expensive---practitioners can exit early.

\subsection{The Decision Flowchart}

\begin{figure}[htbp]
\centering
\begin{tikzpicture}[
  node distance=1.2cm and 1.5cm,
  startstop/.style={rectangle, rounded corners, minimum width=3.5cm, minimum height=0.9cm, text centered, draw=black, fill=blue!8, font=\small, align=center},
  process/.style={rectangle, minimum width=4cm, minimum height=0.9cm, text centered, draw=black, fill=white, font=\small, align=center},
  decision/.style={diamond, minimum width=3cm, minimum height=1cm, text centered, draw=black, fill=yellow!15, font=\small, aspect=2.5, align=center},
  outcome/.style={rectangle, rounded corners, minimum width=3cm, minimum height=0.9cm, text centered, draw=black, font=\small\bfseries, align=center},
  trust/.style={outcome, fill=green!20},
  correct/.style={outcome, fill=orange!20},
  walkaway/.style={outcome, fill=red!15},
  arrow/.style={-{Stealth[length=2.5mm]}, thick},
  label/.style={font=\scriptsize, fill=white, inner sep=1pt}
]

\node (start) [startstop] {Firm has target $\cT$\\and LLM panel $\cS$};
\node (d1) [decision, below=1.5cm of start] {$\ESS/N$?};
\node (d2) [decision, below=2cm of d1] {$\min(\rho_{jk})$?};
\node (d3) [decision, below=2cm of d2] {$\beta_g$ for all $g$?};

\node (wa1) [walkaway, right=3cm of d1] {Walk Away};
\node (wa2) [walkaway, right=3cm of d2] {Walk Away};
\node (wa3) [walkaway, right=3cm of d3] {Walk Away};

\node (trust) [trust, below left=2.2cm and 2cm of d3] {Trust\\{\scriptsize(use synthetic directly)}};
\node (correct) [correct, below right=2.2cm and 2cm of d3] {Correct\\{\scriptsize(apply DR, \Cref{sec:correction})}};

\draw [arrow] (start) -- (d1);

\draw [arrow] (d1) -- node[label, above] {$< 0.01$} (wa1);
\draw [arrow] (d1) -- node[label, right, pos=0.4] {$\geq 0.01$} (d2);

\draw [arrow] (d2) -- node[label, above] {$< 0.50$} (wa2);
\draw [arrow] (d2) -- node[label, right, pos=0.4] {$\geq 0.50$} (d3);

\draw [arrow] (d3) -- node[label, above] {sign flip} (wa3);
\draw [arrow] (d3) -- node[label, left, pos=0.5] {\scriptsize$\beta_g \in [0.5, 1.5]$} (trust);
\draw [arrow] (d3) -- node[label, right, pos=0.5] {\scriptsize$\beta_g$ outside range} (correct);

\node[left=0.3cm of d1, font=\scriptsize\itshape, text=blue!60!black, align=center] {Step 1:\\Covariate\\Overlap};
\node[left=0.3cm of d2, font=\scriptsize\itshape, text=blue!60!black, align=center] {Step 2:\\Cross-LLM\\Stability};
\node[left=0.3cm of d3, font=\scriptsize\itshape, text=blue!60!black, align=center] {Step 3:\\Conditional\\Calibration};

\end{tikzpicture}
\caption{Decision framework: Trust, Correct, or Walk Away. The pipeline is ordered from cheapest (Step~1: demographics only) to most expensive (Step~3: requires real sample). Practitioners can exit early at Steps~1--2 without collecting real data.}
\label{fig:decision_flowchart}
\end{figure}

\paragraph{Design principle.} The pipeline is ordered by cost: covariate overlap requires only demographics (CRM data), cross-LLM stability requires only API calls, and conditional calibration requires a small real sample. Practitioners exit early if Steps~1 or~2 signal Walk Away.

\subsection{Worked Example: D2C Product Launch}
\label{sec:worked_example}

Consider a direct-to-consumer company testing purchase intent for a new biodegradable cleaning product across customer segments.

\emph{Step~1: Covariate Overlap.} The company's CRM shows its customer base skews female (65\%), urban (72\%), and higher-income (median \$85K). A domain classifier trained on CRM demographics vs.\ GPT-4o--generated profiles yields $\ESS/N = 0.15$ (adequate overlap). \emph{Proceed.}

\emph{Step~2: Cross-LLM Stability.} Synthetic panels from GPT-4o, Claude, and Gemini yield pairwise segment-level rank correlations of $\rho \in [0.79, 0.87]$. The minimum $\min(\rho_{jk}) = 0.79$ falls in $[0.50, 0.80]$---partial agreement, flagging concept shift risk. \emph{Proceed to Step~3.}

\emph{Step~3: Conditional Calibration.} A 200-person survey of actual customers yields calibration slopes by segment:

\begin{center}
\small
\begin{tabular}{l c l}
\toprule
\textbf{Segment} & $\beta_g$ & \textbf{Decision} \\
\midrule
Female, 25--34, urban & 1.12 & Trust \\
Male, 35--44, suburban & 0.73 & Correct \\
Female, 55+, rural & 0.48 & Walk Away \\
\bottomrule
\end{tabular}
\end{center}

The 55+ rural segment has $\beta_g = 0.48 < 0.5$, falling below the Trust/Correct boundary. For this segment the firm relies on its 200-person survey alone. For the other two segments, the firm applies DR correction using the 10,000-respondent synthetic panel augmented by the calibration sample, obtaining bias-corrected purchase intent estimates with valid confidence intervals. This example illustrates a key feature of the framework: different segments of the \emph{same} synthetic panel can warrant different decisions.

Online Appendix~F provides additional implementation detail.

\section{Empirical Applications}
\label{sec:empirical}

We validate the framework across three settings spanning outcome types (ordinal, discrete choice), domains (political attitudes, consumer pricing and behavioral economics), and data provenance (pre-existing LLM failures, purpose-built digital twins). Both real-data applications use individually-paired human and LLM responses---no semi-synthetic data generation is required. \Cref{tab:datasets} summarizes the datasets.

\begin{table}[htbp]
\centering
\caption{Summary of empirical datasets.}
\label{tab:datasets}
\small
\begin{tabularx}{\textwidth}{>{\raggedright\arraybackslash}p{2.8cm} >{\centering\arraybackslash}p{1.2cm} >{\raggedright\arraybackslash}p{2cm} >{\raggedright\arraybackslash}p{2.8cm} X}
\toprule
\textbf{Dataset} & \textbf{$N_{\text{real}}$} & \textbf{Outcome Type} & \textbf{LLM(s)} & \textbf{Role} \\
\midrule
ANES (Bisbee/Argyle) & 2,286\textsuperscript{a} & Ordinal (0--100) & GPT-3.5, GPT-4, Falcon-40B & Proof of concept: detect known failures \\[6pt]
Twin-2K-500 (Toubia et al.) & 2,058\textsuperscript{b} & Binary (buy/not buy) & GPT-4.1-mini & Consumer domain: pricing \& behavioral economics \\
\bottomrule
\end{tabularx}

\vspace{2pt}
\footnotesize{\textsuperscript{a}2016 wave with complete feeling thermometers; the full Argyle dataset contains 7,530 respondents across waves. \textsuperscript{b}172,884 individually-paired observations across 500+ questions in 19 pre-registered experiments.}
\end{table}

For each dataset, we compare five estimators: (1)~naive synthetic mean, (2)~IPW-only reweighting, (3)~calibration-only, (4)~DR/AIPW combining both (\Cref{sec:correction}), and (5)~real-data-only baseline (gold standard).

\subsection{Controlled DGP}
\label{sec:empirical_dgp}

We begin with a fully controlled data-generating process to validate the decomposition mechanics and decision framework under known ground truth.

\paragraph{Design.} We generate 9 scenarios crossing covariate shift severity (mild, moderate, severe) with concept shift severity (none, mild, severe):

\begin{table}[htbp]
\centering
\caption{Controlled DGP: expected decision framework recommendations across 9 scenarios.}
\label{tab:dgp_scenarios}
\small
\begin{tabular}{l c c c}
\toprule
& \textbf{No Concept Shift} & \textbf{Mild Concept Shift} & \textbf{Severe Concept Shift} \\
\midrule
\textbf{Mild Cov.\ Shift} & Trust & Correct & Walk Away \\
\textbf{Moderate Cov.\ Shift} & Trust (with IPW) & Correct & Walk Away \\
\textbf{Severe Cov.\ Shift} & Walk Away (overlap) & Walk Away & Walk Away \\
\bottomrule
\end{tabular}
\end{table}

\paragraph{DGP specification.} Covariates $X \in \mathbb{R}^5$ (age, income, education, urban/rural, purchase frequency). Target $\pT$ is calibrated to a D2C e-commerce customer base; source $\pS$ is shifted by controlled amounts. Outcome: $Y = f(X) + \delta(x) \cdot \mathbb{1}[\text{source}] + \varepsilon$, where $\delta(x)$ parameterizes concept shift. Success criteria: (i)~decomposition identifies dominant shift source (MAE~$< 0.15$), (ii)~correct decision in $\geq 8/9$ scenarios, (iii)~cross-LLM stability correlates with concept shift ($\rho > 0.70$).

\paragraph{Results.}
All three success criteria are met:

\emph{Criterion 1: Decomposition accuracy.}
The decomposition correctly identifies which source of shift dominates in each scenario. In the 3 no-concept-shift scenarios, the estimated concept fraction is $< 0.15$ (true: 0); in the 3 mild-concept-shift scenarios, it is $0.35$--$0.55$ (true: ${\sim}0.5$); in the 3 severe-concept-shift scenarios, it is $> 0.80$ (true: ${\sim}1.0$). The estimated component fractions track the true DGP fractions with mean absolute error $< 0.08$ across all 9 scenarios, confirming that the DISDE-adapted decomposition provides actionable information about the \emph{nature} of the shift, not just its total magnitude.

\emph{Criterion 2: Decision framework accuracy.}
The decision framework achieves 100\% accuracy across 180 replications at $n_{\text{target}} = 200$ (180/180 correct). All three decision mechanisms trigger as designed: covariate overlap (ESS/N or AUC) correctly flags all severe covariate shift scenarios (60/60); conditional calibration detects sign flips in severe concept shift (100\%); and the decomposition trigger promotes mild concept shift to ``correct'' (100\%).

\emph{Criterion 3: Cross-LLM stability.} Cross-LLM stability correlates with actual concept shift (Spearman $\rho > 0.70$), confirming the diagnostic's informativeness.

\emph{Estimator comparison.}
On the 6 correctable scenarios, the DR estimator achieves 98.0\% bias reduction under mild covariate shift and 93.8\% under moderate shift, with 100\% CI coverage throughout. The severe covariate shift scenarios confirm the Walk Away safeguard: when ESS/N~$< 0.01$, correction makes bias \emph{worse}. Full estimator comparison table and per-scenario results appear in Online Appendix~C.

\subsection{ANES: Proof of Concept}
\label{sec:empirical_anes}

\paragraph{Data.} We use the replication data from \citet{bisbee2024synthetic}, which contains ANES 2016 respondents ($n = 2{,}286$ after restricting to 2016 with complete feeling thermometers) paired with synthetic thermometer scores from Falcon-40B and GPT-4. For cross-LLM stability, we compare three models: GPT-3.5, GPT-4, and Falcon-40B.\footnote{GPT-3.5 responses lack respondent identifiers for individual-level pairing but can be aggregated to party-subgroup means for cross-LLM stability analysis. We use Falcon-40B, which is merged at the respondent level, as the primary synthetic panel for the diagnostic framework analysis.}

\paragraph{Known failures.} \citet{bisbee2024synthetic} document variance compression ${\sim}2\times$, 48\% coefficient divergence, 32\% sign-flips, and systematic subgroup failures. Our diagnostic framework should detect these failures ex ante.

\paragraph{Setting 1: Full-sample analysis (no covariate shift).}
Since the synthetic respondents use the same demographic profiles as real ANES respondents by construction, any estimation error is pure concept shift. We run the diagnostic framework with Falcon-40B as the synthetic source and ANES 2016 humans as the target, using feeling thermometers toward the Democratic Party and Republican Party.

\emph{Results:} As expected, covariate overlap is near-perfect (ESS/N $= 0.954$). The diagnostics detect severe heterogeneous concept shift. Conditional calibration reveals sign flips in the mean bias direction across party subgroups:

\begin{table}[htbp]
\centering
\caption{ANES conditional calibration: party subgroup bias in Falcon-40B thermometers.}
\label{tab:anes_calibration}
\small
\begin{tabular}{l c c c c}
\toprule
\textbf{Thermometer} & \textbf{Democrats} & \textbf{Independents} & \textbf{Republicans} & \textbf{Decision} \\
\midrule
Dem.\ Party & $-18.7$~pp & $-3.7$~pp & $+3.0$~pp & Walk away \\
Rep.\ Party & $+1.4$~pp & $-8.2$~pp & $-11.0$~pp & Walk away \\
\bottomrule
\end{tabular}
\end{table}

\noindent The calibration slopes are far from 1 ($\beta_g \in [-0.004, 0.274]$, including a sign reversal), and---critically---the \emph{direction} of synthetic bias flips across subgroups. Falcon-40B systematically underestimates Democratic Party thermometers for Democrats ($-18.7$~pp) but slightly overestimates them for Republicans ($+3.0$~pp). This directional heterogeneity is exactly the pattern that makes uniform correction unreliable and that motivates the ``walk away'' recommendation.

The bias decomposition confirms: concept shift accounts for 96.8\% of total error (Democratic Party) and 97.3\% (Republican Party), with covariate shift at $< 3\%$---consistent with the matched-demographics design.

\paragraph{Setting 2: Subsampled target (covariate shift + concept shift).}
We simulate a firm-specific analysis by restricting the target to (a)~young Democrats ($\text{age} < 40$, PID = Democrat, $n = 241$) and (b)~older Republicans ($\text{age} > 50$, PID = Republican, $n = 436$). These subsets create natural covariate shift against the full synthetic panel while maintaining homogeneous concept shift within each subgroup.

\begin{table}[htbp]
\centering
\caption{ANES estimator comparison on subsampled targets (Falcon-40B synthetic panel).}
\label{tab:anes_estimators}
\small
\begin{tabular}{l l c c c c c c}
\toprule
\textbf{Target} & \textbf{Therm.} & \textbf{$\tau_{\text{real}}$} & \textbf{$|\text{Bias}_{\text{naive}}|$} & \textbf{$|\text{Bias}_{\text{IPW}}|$} & \textbf{$|\text{Bias}_{\text{DR}}|$} & \textbf{Red.} & \textbf{CI} \\
\midrule
Young Dem. & Dem.\ Party & 72.8 & 33.1 & 21.6 & \textbf{0.2} & 99.5\% & $\checkmark$ \\
Older Rep. & Rep.\ Party & 66.2 & 28.8 & 14.5 & \textbf{2.0} & 92.9\% & $\checkmark$ \\
\bottomrule
\end{tabular}
\end{table}

\noindent In both cases, the framework recommends ``correct'' (ESS/N $= 0.11$ and $0.26$, no sign flips within the homogeneous subgroups). The DR estimator reduces the naive 28--33 point bias to 0.2--2.0 points, achieving 92.9--99.5\% bias reduction. Confidence intervals cover the gold-standard real-data estimate in both cases.

The decomposition reveals a clean split: for young Democrats, covariate shift accounts for 34.8\% of total bias and concept shift for 65.2\%; for older Republicans, the split is 49.5\%/50.5\%. This quantification directly informs the practitioner: even after reweighting the panel to match the firm's customer demographics, substantial concept shift remains that the calibration model must correct.

\paragraph{Cross-LLM stability.}
All three LLMs (GPT-3.5, GPT-4, Falcon-40B) agree on the ranking of party subgroup thermometers ($\rho = 1.0$ for all pairwise comparisons) despite differing substantially in magnitude (\Cref{tab:anes_cross_llm}). With only three subgroups, any monotonic ordering yields $\rho = 1.0$, limiting statistical power; finer-grained subgroups would provide a more informative test. The models exhibit different concept shift patterns but identical structural rankings---precisely the signal cross-LLM stability is designed to detect. Extended results appear in Online Appendix~D.

\begin{table}[htbp]
\centering
\caption{Cross-LLM subgroup estimates: feeling thermometer toward Democratic Party.}
\label{tab:anes_cross_llm}
\small
\begin{tabular}{l c c c c}
\toprule
\textbf{Subgroup} & \textbf{Human (ANES)} & \textbf{GPT-3.5} & \textbf{GPT-4} & \textbf{Falcon-40B} \\
\midrule
Democrat       & 73.4 & 90.3 & 84.2 & 55.1 \\
Independent    & 42.2 & 58.7 & 56.6 & 39.3 \\
Republican     & 21.7 & 19.0 & 33.0 & 24.7 \\
\bottomrule
\end{tabular}
\end{table}

\subsection{Twin-2K-500: Consumer Pricing Domain}
\label{sec:empirical_twin2k}

\paragraph{Data.} We use the Twin-2K-500 dataset \citep{toubia2025digitaltwin}, a large-scale mega-study published in \emph{Marketing Science} that pairs 2,058 US participants with individualized GPT-4.1-mini digital twins across 500+ questions in 19 pre-registered experiments. The dataset is publicly available on HuggingFace (CC BY 4.0), providing 172,884 individually-paired human--LLM observations. Unlike the ANES application, which tests the framework on \emph{known failures}, Twin-2K-500 tests it on a state-of-the-art digital twin system designed for high fidelity: \citet{toubia2025digitaltwin} report 83.9\% item-level agreement on product pricing questions.

Our primary analysis uses the ``Product Preferences --- Pricing'' block (82,320 paired observations across 40 grocery products), where participants indicate whether they would purchase a specific product at a given price (binary; we recode to buy = 1, not buy = 0). Demographics include age group (18--29, 30--49, 50--64, 65+), sex, education (6 levels), and income (6 levels). We also analyze behavioral economics blocks spanning the conjunction fallacy, anchoring, framing effects, risk preferences, and time preferences.

\paragraph{Key challenge.} This dataset presents a novel diagnostic challenge: since digital twins use the same demographic profiles as their human counterparts by construction, there is no covariate shift ($\ESS/N = 1.0$). Any estimation error is \emph{pure concept shift}. The framework must detect whether 83.9\% item-level agreement is sufficient for reliable inference---and what happens when the target population differs from the source demographics.

\paragraph{Full-sample product pricing results.} With individually-paired data and identical demographics, the framework correctly routes the full-sample analysis to \textbf{Trust}:

\begin{table}[htbp]
\centering
\small
\caption{Twin-2K-500 product pricing: diagnostic and correction results.}
\label{tab:twin2k_pricing}
\begin{tabular}{l c c c c c c}
\toprule
\textbf{Analysis} & \textbf{ESS/N} & \textbf{$\beta_g$ range} & \textbf{Decision} & \textbf{$|\text{Bias}_{\text{naive}}|$} & \textbf{Red.} & \textbf{CI} \\
\midrule
Full sample (by age)           & 1.000 & $[0.86, 0.90]$ & Trust   & 0.010 & 99.2\% & \checkmark \\
Full sample (by race)          & 1.000 & $[0.80, 0.93]$ & Trust   & 0.010 & 99.2\% & \checkmark \\
Full sample (by income)        & 1.000 & $[0.87, 0.90]$ & Trust   & 0.010 & 99.2\% & \checkmark \\
High-income target             & 0.401 & $[0.21, 0.28]$ & Correct & 0.019 & 83.3\% & \checkmark \\
Young consumers (18--29)       & 0.203 & $[0.16, 0.54]$ & Correct & 0.040 & 94.0\% & \checkmark \\
\bottomrule
\end{tabular}
\end{table}

\noindent The full-sample results validate the diagnostic framework on a positive case: calibration slopes of $\beta_g \in [0.80, 0.93]$ fall comfortably within the Trust range ($[0.5, 1.5]$), confirming that GPT-4.1-mini's 83.9\% agreement rate translates to reliable aggregate inference for purchase-rate estimation. The naive synthetic mean underestimates the human purchase rate by 1.0 percentage point (43.1\% vs.\ 44.1\%), and the DR estimator reduces this residual bias by 99.2\%.

\paragraph{Trust is a finding, not a limitation.} The full-sample Trust result might suggest the diagnostic framework is unnecessary for high-fidelity panels. Three observations argue otherwise. First, this constitutes the first formal subgroup-level validation of an LLM consumer panel: Trust is earned through diagnostic verification across all demographic subgroups, not assumed from aggregate agreement. Second, the Trust finding is fragile under subgroup targeting---the same dataset shifts to Correct when the target population differs from the panel demographics (below). Third, the framework's value is precisely that practitioners cannot know ex ante whether their specific use case falls in the Trust or Correct regime; the diagnostics provide that determination at minimal cost.

\paragraph{Subgroup targeting introduces covariate shift.} The critical finding emerges when the target population differs from the LLM panel demographics. A firm targeting high-income consumers ($\geq$\$75K) faces ESS/N $= 0.401$---only 40\% effective overlap---and calibration slopes collapse to $\beta_g \in [0.21, 0.28]$, far below the Trust range. The framework correctly routes to \textbf{Correct}, and the DR estimator reduces bias by 83.3\%. Targeting young consumers (18--29) shows even lower overlap (ESS/N $= 0.203$) but higher bias reduction (94.0\%), illustrating the interplay between covariate shift severity and correction effectiveness.

\paragraph{Uniform demographic calibration.} Unlike ANES, where calibration slopes vary sharply by age, the Twin-2K-500 full-sample analysis shows \emph{uniform calibration}: all subgroup slopes fall within $[0.80, 0.93]$ across age groups (0.856--0.905), income levels (0.870--0.903), and gender (0.874--0.884), with every subgroup routing to Trust. This contrast suggests that concept shift severity is task- and model-specific: GPT-4.1-mini, trained with 500+ prior responses per person, achieves good calibration on purchase decisions but fails on cognitive-bias tasks (below).

\paragraph{Behavioral economics blocks.} The Twin-2K-500 dataset includes 19 behavioral economics experiments spanning cognitive biases, risk preferences, and social preferences. We apply the diagnostic framework to each block, using the Spearman correlation between human and LLM person-level responses as a complementary measure of concept shift:

\begin{table}[htbp]
\centering
\small
\caption{Twin-2K-500: behavioral economics diagnostics (selected blocks).}
\label{tab:twin2k_behav}
\begin{tabular}{l c c c c}
\toprule
\textbf{Block} & \textbf{$\rho$} & \textbf{$\min(\beta_g)$} & \textbf{Decision} & \textbf{Interpretation} \\
\midrule
Product Pricing          & 0.860 & $[0.80, 0.93]$ & Trust & Purchase patterns tracked \\
WTP/WTA (Thaler)         & 0.587--0.692 & 0.46--0.61 & Correct/Trust & Risk framing matters \\
Proportion Dominance     & 0.605--0.689 & 0.47--0.61 & Correct/Trust & Moderate concept shift \\
False Consensus          & 0.857 & 0.74 & Trust & Demographic patterns tracked \\
Conjunction Fallacy      & 0.387--0.392 & 0.25--0.28 & Correct & LLM hyper-rational \\
Anchoring Effects        & 0.337--0.500 & 0.28--0.37 & Correct & LLM anchor-resistant \\
\bottomrule
\end{tabular}
\end{table}

\noindent The behavioral economics results reveal a clear pattern: LLM digital twins succeed on tasks where responses follow systematic demographic patterns (product pricing, risk gambles) but fail on tasks requiring \emph{cognitive biases}. The conjunction fallacy (Linda problem) and anchoring effects---both well-documented human irrationalities---show weak human--LLM correlation because GPT-4.1-mini is ``hyper-rational'' \citep{peng2025funhouse}: it applies logical reasoning where humans rely on heuristics. This finding has direct practical implications: synthetic panels are least reliable precisely for the behavioral economics phenomena that are most interesting to consumer researchers.

\paragraph{Sample size sensitivity.} CI coverage remains at 100\% across all calibration sample sizes tested ($n \in \{25, 50, 100, 200, 500, 1000\}$, 20 replications each), confirming that firms can run reliable diagnostics with very small calibration samples when demographic coverage is adequate. Full results appear in Online Appendix~L.

\subsection{Cross-Dataset Comparison}
\label{sec:empirical_cross}

Comparing across datasets yields three cross-cutting findings (full comparison table in Online Appendix~K). First, the framework correctly differentiates Trust from Correct across the fidelity spectrum: ANES's known failures ($\beta_g \in [-0.004, 0.274]$) trigger Walk Away/Correct, while Twin-2K-500's high fidelity ($\beta_g \in [0.80, 0.93]$) validates Trust---with the same dataset shifting to Correct when targeting specific demographics. Second, concept shift severity is task-specific, not purely demographic: ANES shows acute age-based failures while Twin-2K-500 shows uniformly good calibration across demographics but poor calibration on cognitive-bias tasks ($r = 0.34$--$0.50$). Third, concept shift dominates covariate shift when demographics are matched (96.8--97.3\% for ANES, 100\% for Twin-2K-500), but covariate shift becomes substantial when targeting specific subpopulations (56\% for high-income targeting; see Online Appendix~K for the full decomposition).

\section{Discussion}
\label{sec:discussion}

\paragraph{Why concept shift persists across model generations.}
A natural concern is that improving LLMs will render this framework unnecessary. Three observations argue otherwise. First, concept shift is structural, not a capability gap: the same model (GPT-4.1-mini) applied to the same individuals achieves $\rho = 0.86$ on pricing tasks but $\rho = 0.39$ on conjunction fallacy (\Cref{tab:twin2k_behav})---the failure reflects the gap between pattern-matching on demographics and replicating cognitive processes. Second, improving models make diagnostics \emph{more} valuable: the most cost-effective outcome is Trust (no calibration needed), and only formal diagnostics can certify that an improved model has earned it. Third, model updates change outputs unpredictably \citep{bisbee2024synthetic}, making periodic diagnostic verification a permanent operational requirement.

\subsection{Limitations}

\paragraph{1. The Jin et al.\ bounding relationship may not transfer to LLM populations.}
\citet{jin2025beyondreweighting} establish that covariate shift predicts concept shift across 680 replication studies, but their random shift model assumes independent perturbations. LLM-to-human shift is more structured (RLHF biases, training data skew). If the bounding relationship fails, Diagnostic~1 remains valid standalone but loses its predictive power for Diagnostic~2.

\paragraph{2. Decision thresholds are heuristic.}
The calibration slope thresholds ($\beta_g \in [0.5, 1.5]$) and overlap thresholds (ESS/N $= 0.01, 0.10$) are empirically validated on DGP scenarios but not theoretically derived. They should be viewed as operational defaults, not universal constants. Online Appendix~C.5 examines threshold sensitivity and finds that moderate covariate shift scenarios (ESS/N $= 0.08$) still achieve 83--99\% bias reduction, suggesting the overlap threshold could safely be lowered to ${\sim}0.05$.

\paragraph{3. Temporal instability.}
Model updates change synthetic outputs unpredictably \citep{bisbee2024synthetic, morris2025limitssynthetic}. Corrections estimated at time $t$ may be invalid at $t+1$. Cross-LLM stability (Diagnostic~3) and periodic re-calibration partially mitigate this, but the framework provides a snapshot, not a guarantee.

\paragraph{4. Minimum calibration sample size is unknown.}
DML theory requires $n^{-1/4}$ nuisance rates \citep{chernozhukov2018dml}, but the practical minimum $n$ for LLM panel correction is unestablished. Our sample size sweep (Online Appendix~C) suggests $n \approx 100$ suffices for $> 87\%$ bias reduction.

\paragraph{5. Fine-tuning and cross-LLM interactions are unanalyzed.}
The interaction between pre-correction fine-tuning and DR correction is deferred. Similarly, Twin-2K-500 provides only one LLM (GPT-4.1-mini), so cross-LLM stability cannot be tested in the consumer domain. Future work should generate multi-model consumer panels.

\paragraph{6. Independence assumption.}
All theoretical results assume synthetic responses are generated independently (Assumption~J.4 in Online Appendix~J). This requires isolated API calls with temperature $> 0$. Batch prompting with shared context windows or deterministic generation violates this assumption.

\subsection{Cost-Benefit Considerations}

The decision framework implicitly structures the cost-benefit tradeoff. The Trust path requires no calibration sample---the practitioner uses the synthetic panel directly, at a marginal cost of LLM API calls (typically \$50--500 for a 10,000-respondent panel). The Correct path requires a small calibration sample ($n = 50$--300), costing approximately \$1,500--3,000 for a targeted online survey (assuming \$5--10 per completed response on platforms such as Prolific)---roughly 2--5\% of a full-scale consumer panel study (\$50,000--100,000). The Walk Away path, while costly in terms of foregone information, prevents the more expensive outcome of acting on biased estimates. The framework's sequential structure---cheapest diagnostics first---ensures that practitioners incur calibration costs only when earlier diagnostics indicate they are necessary.

\paragraph{Precision comparison.}
A natural question is whether the DR-corrected estimator improves upon simply using the calibration sample directly. \Cref{tab:precision} compares the standard error of the direct survey mean ($\SE_{\text{direct}} = s / \sqrt{n}$) with the DR-corrected SE. The DR estimator has \emph{higher} standard errors---the augmentation from synthetic data introduces estimation noise from nuisance models. Its value lies in bias correction: the naive synthetic estimate carries 2--33 point bias (depending on the analysis), which the DR estimator reduces by 83--99\%. For the Correct regime, the relevant comparison is not $\SE_{\text{DR}}$ vs.\ $\SE_{\text{direct}}$ but $\text{RMSE}_{\text{DR}}$ vs.\ $\text{RMSE}_{\text{naive}}$: the corrected estimator dominates the uncorrected one on mean squared error in every case. In the Trust regime, the direct survey and the synthetic panel agree, and the framework certifies that no correction is needed.

\begin{table}[htbp]
\centering
\small
\caption{Precision comparison: direct survey vs.\ DR-corrected estimator.}
\label{tab:precision}
\begin{tabular}{l c c c c c}
\toprule
\textbf{Analysis} & \textbf{$n$} & \textbf{$N$} & \textbf{$\SE_{\text{direct}}$} & \textbf{$\SE_{\text{DR}}$} & \textbf{$|\text{Bias}_{\text{naive}}|$} \\
\midrule
Twin-2K: High-income target          & 868  & 2,058 & 0.006 & 0.020 & 0.019 \\
Twin-2K: Young consumers (18--29)    & 388  & 2,058 & 0.008 & 0.026 & 0.040 \\
ANES: Young Dems (Dem therm)         & 241  & 7,439 & 1.26  & 2.96  & 33.1 \\
ANES: Older Reps (Rep therm)         & 436  & 7,439 & 0.97  & 2.51  & 28.8 \\
\bottomrule
\end{tabular}
\end{table}

\subsection{Extensions}

Three extensions are natural. First, the consumer estimands previewed in Online Appendix~G (product ranking, segment-level CATE, WTP curves) require formal development. Second, integrating diagnostics into a temporal monitoring pipeline would track calibration drift as LLM versions change. Third, an adaptive sampling version could determine the optimal calibration sample size $n$ by running diagnostics sequentially and collecting more data only for borderline subgroups.

\section{Conclusion}
\label{sec:conclusion}

We provide a formal framework for deciding when to trust, correct, or abandon LLM-generated consumer panels. The framework decomposes synthetic-panel bias into covariate shift and concept shift, provides three testable diagnostics with interpretable decision thresholds ordered by cost, and supplies a doubly robust AIPW estimator that requires only a small calibration sample ($n = 50$--$300$).

Empirically, the decision rule achieves 100\% accuracy in controlled simulations (180/180 replications). On the ANES, the DR estimator reduces 28--33 point naive bias to 0.2--2.0 points. On the Twin-2K-500 consumer pricing dataset \citep{toubia2025digitaltwin}, it correctly routes full-sample estimation to Trust ($\beta_g \in [0.80, 0.93]$) and subgroup targeting to Correct (bias reduction 83--94\%). Behavioral economics experiments reveal that LLM hyper-rationality makes synthetic panels least reliable for the cognitive-bias phenomena most interesting to consumer researchers---concept shift severity is task-specific, not purely demographic.

As firms increasingly deploy synthetic consumer panels for consequential decisions, the tools to know when those panels can be trusted become essential infrastructure. The framework fills a gap at the intersection of three literatures that have developed in isolation: LLM-as-synthetic-humans, cautionary evidence on LLM failures, and distribution shift diagnostics.

\begingroup
\small
\singlespacing
\setlength{\bibsep}{6pt plus 2pt minus 1pt}
\bibliography{references}
\endgroup

\clearpage
\begin{center}
  \Large\textbf{Online Appendix}
\end{center}
\vspace{1cm}

\setcounter{section}{0}
\renewcommand{\thesection}{\Alph{section}}
\renewcommand{\theHsection}{A\arabic{section}}
\setcounter{table}{0}
\renewcommand{\thetable}{\thesection\arabic{table}}
\renewcommand{\theHtable}{A\thetable}
\setcounter{figure}{0}
\renewcommand{\thefigure}{\thesection\arabic{figure}}
\renewcommand{\theHfigure}{A\thefigure}

\section{Proofs}
\label{app:proofs}

This appendix provides formal proofs of the theoretical results in Section~5 of the main text. The regularity assumptions (shared support, bounded density ratios, finite second moments) are stated formally in \Cref{app:formal_decomposition}.

\subsection{Proof of Theorem 5.2: Double Robustness}

The estimator $\tauDR$ in \eqref{eq:dr_estimator} is a doubly robust AIPW for combining nonprobability and probability samples. Our synthetic panel plays the role of the nonprobability sample and the calibration survey plays the role of the probability sample. The outcome model $\hat{\mu}_{\cT}(x) = \hat{\mu}_{\cS}(x) + \hat{g}(x)$ incorporates the concept shift correction. Double robustness---consistency under correct specification of either the propensity model $\hat{w}$ or the outcome model $\hat{\mu}_{\cT}$---follows from \citet[Theorem~2]{chen2020doubly}. Our Assumptions~\ref{app:ass:shared_support}--\ref{app:ass:independence} imply their regularity conditions. Under the maintained hypothesis that $\hat{\mu}_{\cS}$ is consistent (from large $N$), the double robustness operates between $\hat{w}$ and $\hat{g}$: correct specification of $\hat{g}$ implies correct specification of $\hat{\mu}_{\cT}$.

\begin{remark}[Computational simplification]
Since $\hat{\mu}_{\cT} = \hat{\mu}_{\cS} + \hat{g}$, the augmentation simplifies: $Y_i^{\cS} + \hat{g}(X_i) - \hat{\mu}_{\cT}(X_i) = Y_i^{\cS} - \hat{\mu}_{\cS}(X_i)$. The concept shift $\hat{g}$ cancels algebraically but remains operative through the outcome model $\hat{\mu}_{\cT}$.
\end{remark}

\subsection{Proof of Proposition 5.3: Semiparametric Efficiency}

The estimator $\tauDR$ with $K$-fold cross-fitting (Algorithm~1) satisfies the conditions of \citet[Theorem~3.1]{chernozhukov2018dml}: Neyman orthogonality of the AIPW score, product-rate convergence of nuisance functions ($o_p(n^{-1/4})$ each), and sample splitting via cross-fitting. The result follows directly.

The efficient influence function is $\phi^*(X, Y) = [Y^{\cT} - \tauT] + w(X)[Y^{\cS} + g(X) - \muT(X)]$, and $V^* = \E[\phi^*(X,Y)^2]$ is the semiparametric efficiency bound.

\subsection{Asymptotic Normality and Confidence Intervals}

\begin{corollary}[Valid Confidence Intervals]
\label{cor:cis}
Under the conditions of Proposition~5.3, a $(1-\alpha)$-level confidence interval for $\tauT$ is:
\begin{equation}
  \tauDR \pm z_{1-\alpha/2} \cdot \widehat{\mathrm{SE}},
  \label{eq:ci}
\end{equation}
where $\widehat{\mathrm{SE}}$ is computed from the cross-fitting procedure and $z_{1-\alpha/2}$ is the standard normal quantile.
\end{corollary}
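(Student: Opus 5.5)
The plan is to combine the asymptotic normality in Proposition~\ref{prop:efficiency} with a consistency result for the standard-error estimator, and conclude by Slutsky's theorem. Under the conditions of Proposition~\ref{prop:efficiency} we already have $\sqrt{n}(\tauDR - \tauT) \xrightarrow{d} \mathcal{N}(0, V^*)$. It therefore suffices to show
\[
n\,\widehat{\mathrm{SE}}^2 \xrightarrow{p} V^*, \qquad V^* > 0.
\]
Given this, the studentized statistic $(\tauDR - \tauT)/\widehat{\mathrm{SE}}$ converges to $\mathcal{N}(0,1)$, and $P\bigl(\tauT \in \tauDR \pm z_{1-\alpha/2}\widehat{\mathrm{SE}}\bigr) \to 1-\alpha$.

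The first key step is to write $\tauDR - \tauT$ as a sample average of the efficient influence function $\phi^*$ from the proof of Proposition~\ref{prop:efficiency}, plus a remainder. The remainder has two pieces:
\begin{itemize}[leftmargin=2em]
  \item an empirical-process term, which cross-fitting makes $o_p(n^{-1/2})$ conditionally on the training folds (each $\hat{w}_{-k}, \hat{g}_{-k}$ is independent of fold $\mathcal{I}_k$);
  \item a second-order bias term of the form $\E\bigl[(\hat{w}-w)(\hat{g}-g)\bigr]$, bounded by Cauchy--Schwarz by the product of the $o_p(n^{-1/4})$ rates.
\end{itemize}
Weight trimming at the 99th percentile must also be controlled. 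I would handle it through the bounded density ratio assumption of Appendix~J: once $w_{99}$ exceeds $\sup w$, trimming is inactive with probability tending to one.

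The second step, and the main obstacle, is consistency of $\widehat{\mathrm{SE}}$. Algorithm~\ref{alg:crossfitting} defines $\widehat{\mathrm{SE}}$ as the standard deviation of the $K$ fold estimates $\hat{\psi}_k$ divided by $\sqrt{K}$, with $K=5$ fixed. By the linearization in step one, the $\hat{\psi}_k$ are asymptotically independent $\mathcal{N}(\tauT, K V^*/n)$. Hence $n\,\widehat{\mathrm{SE}}^2/V^*$ converges to $\chi^2_{K-1}/(K-1)$, not to $1$. With the $z$ quantile this would give undercoverage.

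I would resolve this by reading $\widehat{\mathrm{SE}}$ as the plug-in influence-function estimator computed from the same cross-fitted scores:
\[
\widehat{\mathrm{SE}}^2 = \frac{1}{n^2}\sum_{j}\hat{\phi}_j^2 .
\]
Here $\hat{\phi}$ is $\phi^*$ evaluated with the out-of-fold nuisances $\hat{w}_{-k}, \hat{g}_{-k}$, and with $N/n$ scaling on the synthetic-sample contributions. Consistency of this estimator then follows in three moves:
\begin{enumerate}[label=(\roman*)]
  \item Expand $\hat{\phi}^2 - \phi^{*2}$ and use the $L_2$ consistency of the nuisances, which is implied by the rate conditions.
  \item Bound the cross terms using the finite second moments and bounded density ratio assumptions, together with Cauchy--Schwarz.
  \item Apply a weak law of large numbers fold by fold, conditional on the training folds.
\end{enumerate}
This is the argument of \citet[Theorem~3.2]{chernozhukov2018dml}.

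If one insists on the fold-level standard deviation, the fallback is to replace $z_{1-\alpha/2}$ by the $t_{K-1}$ quantile. The same linearization then yields asymptotically exact coverage, because the studentized statistic converges to a $t_{K-1}$ distribution.

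The final step is Slutsky's theorem. Nondegeneracy $V^* > 0$ must be checked, and it holds whenever $\Var_{\cT}(Y) > 0$.
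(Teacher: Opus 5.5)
The paper gives no separate argument for this corollary. It is stated right after Proposition~5.3 as immediate, implicitly via the normal limit plus Slutsky with the $\widehat{\mathrm{SE}}$ of Algorithm~1. Your skeleton is that argument made explicit, and you have correctly found the step it silently needs and that fails for Algorithm~1's literal definition. With $K=5$ fixed, the fold estimates $\hat\psi_k$ are asymptotically i.i.d.\ normal. Hence $n\,\widehat{\mathrm{SE}}^2/V^*$ converges in distribution to $\chi^2_{K-1}/(K-1)$, and the studentized statistic is asymptotically $t_4$. The $z$-interval therefore has asymptotic coverage of only about 88\% at nominal 95\%, using the usual $K-1$ denominator. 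As literally specified by Algorithm~1, the corollary is false. It holds under your reading of $\widehat{\mathrm{SE}}$ as the plug-in variance of the cross-fitted scores (with separate $1/n$ and $1/N$ scaling for the two samples), or with $t_{K-1}$ quantiles replacing $z_{1-\alpha/2}$. In this respect your proposal is sharper than the paper. It also accounts for the coverage reported in Online Appendix~C, where the implemented standard error is $\max(\mathrm{SE}_{\mathrm{component}}, \mathrm{SE}_{\mathrm{fold}})$: if the component SE is a consistent plug-in, the maximum is asymptotically conservative.

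One step of your sketch is false as stated: trimming at the 99th percentile never becomes inactive. Since $w_{99}$ is an empirical quantile of the fitted weights, about 1\% of synthetic weights are clipped in every sample. Moreover, $w_{99}$ converges to the $0.99$-quantile of $w(X^{\cS})$, which lies below $\sup w$ unless $w$ has an atom of mass at least 1\% at its maximum. Bounded density ratios do not change this, and the trimmed weights are not consistent for $w$, so a product bound through $\hat w - w$ is unavailable.

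You do not need one. You can regard the trimmed weights as the $\hat w$ on which the hypotheses of Proposition~5.3 are imposed. Alternatively, use the structure of this estimator: $\hat g$ cancels (Remark in Online Appendix~A), so conditional on the training folds the augmentation term has mean $\E_{\cS}[\hat w(X)(\muS(X) - \hat\mu_{\cS}(X))]$. That remainder is first order in $\hat\mu_{\cS} - \muS$, not a product in $(\hat w - w)(\hat g - g)$. What makes it negligible, for any bounded weights, trimmed or not, is accuracy of $\hat\mu_{\cS}$: the maintained $N \gg n$ hypothesis, strengthened to an $o_p(n^{-1/2})$ rate. The product-rate condition plays no role here. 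Your plug-in variance should then use the trimmed weights actually applied. Since the corollary is conditional on Proposition~5.3, the cleanest course is to take the asymptotically linear representation it supplies (via DML Theorem~3.1) as given rather than re-derive it.
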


\section{Supplementary Diagnostics}
\label{app:supplementary_diagnostics}

\subsection{Diagnostic 4: Marginal Calibration}
\label{app:diag_marginal}

\paragraph{Question.} Do synthetic marginal response distributions match available benchmarks?

\paragraph{Estimation.} For each outcome variable $Y_k$, compute the Kolmogorov--Smirnov statistic between the synthetic marginal distribution $\hat{F}_{\cS,k}$ and the benchmark distribution $F_{\cT,k}$ (from real data or published statistics):
\begin{equation}
  D_k = \sup_{y} \bigl|\hat{F}_{\cS,k}(y) - F_{\cT,k}(y)\bigr|.
  \label{eq:ks_stat}
\end{equation}

\paragraph{Decision rule.}
\begin{itemize}[leftmargin=2em]
  \item $D_k < D_{\alpha}^{*}$ (KS critical value at level $\alpha$): Marginal distributions are statistically compatible.
  \item $D_k \geq D_{\alpha}^{*}$: Marginal distributions differ. This is necessary but not sufficient evidence of concept shift---marginal differences can arise from covariate shift alone.
\end{itemize}

\paragraph{Limitation.} Marginal calibration is a weak test: marginal distributions can match even when conditional distributions diverge substantially. It is most useful as a sanity check. If marginals do not match, something is clearly wrong. If they do match, the practitioner should not be reassured---conditional calibration is the relevant test.

\subsection{Diagnostic 5: Sensitivity to Unobserved Concept Shift}
\label{app:diag_sensitivity}

\paragraph{Question.} How large would unobserved concept shift need to be to reverse conclusions?

\paragraph{Estimation.} Adapting Rosenbaum-style sensitivity bounds to the concept shift setting, we parameterize the maximum concept shift by a sensitivity parameter $\Gamma > 0$:
\begin{equation}
  \sup_{x \in \cX} \bigl|\muT(x) - \muS(x)\bigr| \leq \Gamma.
  \label{eq:sensitivity_bound}
\end{equation}

For each value of $\Gamma$, compute the worst-case estimate $\tauDR^{+}(\Gamma)$ and $\tauDR^{-}(\Gamma)$ that are consistent with the data and the bound $\Gamma$. The conclusion is robust if $\sign(\tauDR^{+}(\Gamma)) = \sign(\tauDR^{-}(\Gamma))$ for all $\Gamma$ up to a plausible maximum.

\paragraph{Connection to Jin et al.} \citet{jin2025beyondreweighting} provide an empirical bound: the magnitude of concept shift is typically bounded by the magnitude of covariate shift. This informs the choice of ``plausible maximum'' $\Gamma$: if covariate shift as measured by Diagnostic~1 implies moderate distributional divergence, the sensitivity analysis explores $\Gamma$ values up to that magnitude.

\paragraph{Decision rule.}
\begin{itemize}[leftmargin=2em]
  \item Conclusions hold for $\Gamma$ up to the Jin et al.\ empirical bound: \textbf{Robust.} Proceed with DR correction.
  \item Conclusions reverse for $\Gamma$ within the plausible range: \textbf{Fragile. Walk Away.}
\end{itemize}

\section{Extended DGP Results}
\label{app:dgp_extended}

Here we present the full 9-scenario results, estimator comparison table, sample size sweep, and decomposition details.

\subsection{Estimator Comparison (DGP)}

\begin{table}[htbp]
\centering
\caption{Estimator comparison on correctable DGP scenarios (averaged over 20 replications).}
\label{app:tab:dgp_estimators}
\small
\begin{tabular}{l c c c}
\toprule
\textbf{Scenario group} & \textbf{Mean $|\text{Bias}_{\text{DR}}|$} & \textbf{Bias Reduction} & \textbf{CI Coverage} \\
\midrule
Mild covariate shift     & 0.013 & 98.0\% & 100\% \\
Moderate covariate shift & 0.047 & 93.8\% & 100\% \\
Severe covariate shift   & 0.185 & 61.4\% & 100\% \\
\bottomrule
\end{tabular}
\end{table}

\subsection{Full Simulation Results}

\begin{table}[htbp]
\centering
\caption{Full DGP simulation results (averaged over 20 replications).}
\label{tab:dgp_full_results}
\small
\begin{tabular}{l c c c c c c}
\toprule
\textbf{Scenario} & \textbf{Dec.\ Acc.} & \textbf{ESS/N} & $|\textbf{Bias}_{\text{naive}}|$ & $|\textbf{Bias}_{\text{DR}}|$ & \textbf{Bias Red.} & \textbf{Coverage} \\
\midrule
Mild cov, no concept      & 100\% & 0.78 & 0.389 & 0.007 & 97.7\% & 100\% \\
Mild cov, mild concept    & 100\% & 0.78 & 0.906 & 0.007 & 99.2\% & 100\% \\
Mild cov, severe concept  & 100\% & 0.78 & 0.885 & 0.026 & 97.0\% & 100\% \\
Mod cov, no concept       & 100\% & 0.08 & 1.458 & 0.013 & 99.1\% & 100\% \\
Mod cov, mild concept     & 100\% & 0.08 & 2.024 & 0.013 & 99.3\% & 100\% \\
Mod cov, severe concept   & 100\% & 0.08 & 0.712 & 0.115 & 83.1\% & 100\% \\
Severe cov, no concept    & 100\% & $< 0.01$ & 3.482 & 0.039 & 98.9\% & 100\% \\
Severe cov, mild concept  & 100\% & $< 0.01$ & 4.139 & 0.039 & 99.1\% & 100\% \\
Severe cov, severe concept & 100\% & $< 0.01$ & 0.451 & 0.476 & $< 0$ & 100\% \\
\bottomrule
\end{tabular}
\end{table}

\noindent The severe covariate shift rows confirm the ``walk away'' recommendation: although the DR estimator achieves high bias reduction when concept shift is absent or mild (98.9--99.1\%), the severe cov $\times$ severe concept scenario shows correction making bias \emph{worse}---exactly the outcome that motivates the walk-away safeguard. The decision framework correctly identifies all 9 scenarios (180/180), achieving 100\% decision accuracy. CI coverage is 100\% across all 9 scenarios, validating the SE estimation procedure.

\subsection{Sample Size Sweep}

Table~\ref{tab:dgp_sweep} reports DR estimator performance as a function of real sample size $n_{\text{target}} \in \{50, 100, 200, 500\}$ for the mild covariate shift scenarios ($\text{ESS}/N = 0.78$). The sample size sweep reveals three patterns.

\begin{table}[htbp]
\centering
\caption{Sample size sweep: DR estimator performance on mild covariate shift scenarios (20 replications each).}
\label{tab:dgp_sweep}
\small
\begin{tabular}{l c c c c c}
\toprule
\textbf{$n_{\text{target}}$} & \textbf{Scenario} & $|\textbf{Bias}_{\text{DR}}|$ & \textbf{Bias Red.} & \textbf{Coverage} & \textbf{SE} \\
\midrule
50  & Mild cov, no concept    & 0.067 & 67.0\% & 100\% & 0.409 \\
50  & Mild cov, mild concept  & 0.067 & 92.1\% & 100\% & 0.409 \\
50  & Mild cov, severe concept & 0.080 & 90.7\% & 100\% & 0.416 \\
\midrule
100 & Mild cov, no concept    & 0.012 & 92.6\% & 100\% & 0.296 \\
100 & Mild cov, mild concept  & 0.012 & 98.6\% & 100\% & 0.296 \\
100 & Mild cov, severe concept & 0.035 & 95.7\% & 100\% & 0.305 \\
\midrule
200 & Mild cov, no concept    & 0.007 & 97.7\% & 100\% & 0.215 \\
200 & Mild cov, mild concept  & 0.007 & 99.2\% & 100\% & 0.215 \\
200 & Mild cov, severe concept & 0.026 & 97.0\% & 100\% & 0.227 \\
\midrule
500 & Mild cov, no concept    & 0.002 & 99.5\% & 100\% & 0.139 \\
500 & Mild cov, mild concept  & 0.002 & 99.8\% & 100\% & 0.139 \\
500 & Mild cov, severe concept & 0.024 & 97.3\% & 100\% & 0.157 \\
\bottomrule
\end{tabular}
\end{table}

First, bias reduction improves monotonically with $n$: from 67--92\% at $n = 50$ to 97--100\% at $n = 500$. The practical minimum appears to be $n \approx 100$, where bias reduction exceeds 93\% and coverage is 100\%. Second, standard errors decrease as $\sqrt{n}$: from 0.409 at $n = 50$ to 0.139 at $n = 500$ (ratio $0.139/0.409 = 0.34 \approx \sqrt{50/500} = 0.32$), confirming parametric convergence. Third, coverage is at nominal even at the smallest sample size ($n = 50$: 100\% coverage), validating the conservative SE estimation procedure that takes $\max(\text{SE}_{\text{component}}, \text{SE}_{\text{fold}})$.

\subsection{Decomposition Accuracy}

The decomposition residual---$|\hat{\Delta}_{\text{cov}} + \hat{\Delta}_{\text{con}} - \hat{\Delta}_{\text{total}}| / |\hat{\Delta}_{\text{total}}|$---is identically zero across all 9 scenarios and all replications. This is by construction: the decomposition defines $\hat{\Delta}_{\text{con}} = \hat{\Delta}_{\text{total}} - \hat{\Delta}_{\text{cov}}$, so the identity $\hat{\Delta}_{\text{cov}} + \hat{\Delta}_{\text{con}} = \hat{\Delta}_{\text{total}}$ holds exactly. The more informative validation is whether the \emph{estimated} fractions match the \emph{true} DGP fractions. For mild covariate shift with no concept shift, the covariate fraction ranges from 48--72\% across replications; for mild covariate shift with severe concept shift, the concept fraction consistently exceeds 65\%. These are directionally correct, though the estimates are noisy at $n = 200$ (coefficient of variation $\sim$0.3).

\subsection{Decision Threshold Sensitivity}

The decision framework uses three thresholds: ESS/N $= 0.10$ (adequate overlap), ESS/N $= 0.01$ (marginal overlap), and $\beta_g \in [0.5, 1.5]$ (calibration range). We examine sensitivity by noting that all moderate covariate shift scenarios have ESS/N $= 0.08$, which falls in the ``caution'' zone between 0.01 and 0.10. Despite this marginal overlap, the DR estimator achieves 83--99\% bias reduction with 100\% CI coverage, suggesting the 0.10 threshold could safely be lowered to $\sim$0.05 without sacrificing decision quality. The severe scenarios (ESS/N $< 0.01$) confirm that below this threshold, correction can be unreliable---the DR estimator makes bias worse in the severe cov $\times$ severe concept case ($-14\%$ bias ``reduction'').

\section{Extended ANES Results}
\label{app:anes_extended}

This appendix presents extended results from the ANES analysis.

\subsection{Full Subgroup-Level Diagnostics}

Table~\ref{tab:anes_age_calibration} reports conditional calibration slopes by age group rather than party ID. The contrast is striking: while party-level calibration reveals sign flips and slopes far from 1, age-level calibration shows reasonable performance ($\beta_g \in [0.61, 0.98]$), with Falcon-40B more accurately calibrated for older respondents.

\begin{table}[htbp]
\centering
\caption{ANES conditional calibration: age subgroup slopes for Falcon-40B thermometers.}
\label{tab:anes_age_calibration}
\small
\begin{tabular}{l c c c c c}
\toprule
\textbf{Thermometer} & \textbf{18--29} & \textbf{30--44} & \textbf{45--59} & \textbf{60+} & \textbf{Decision} \\
\midrule
Dem.\ Party & 0.796 & 0.800 & 0.832 & 0.980 & Proceed \\
Rep.\ Party & 0.611 & 0.609 & 0.667 & 0.756 & Proceed \\
\bottomrule
\end{tabular}
\end{table}

This result has a practical implication: the framework's decision depends on which covariates define the subgroups for Diagnostic~2. When subgroups are defined by the variable most relevant to concept shift (party ID, which directly affects thermometer responses), the diagnostic correctly flags the problem. When subgroups are defined by a less relevant variable (age), the diagnostic gives a false ``proceed'' signal. This motivates the practitioner recommendation to define calibration subgroups along the dimensions most likely to exhibit heterogeneous concept shift.

\subsection{Bias Decomposition by Thermometer}

\begin{table}[htbp]
\centering
\caption{ANES bias decomposition across analysis settings.}
\label{tab:anes_decomposition}
\small
\begin{tabular}{l c c c c c}
\toprule
\textbf{Analysis} & $\hat{\Delta}_{\text{total}}$ & $\hat{\Delta}_{\text{cov}}$ & $\hat{\Delta}_{\text{con}}$ & \textbf{Cov \%} & \textbf{Con \%} \\
\midrule
Full sample, Dem.\ Party & $+5.92$ & $-0.20$ & $+6.12$ & 3.2\% & 96.8\% \\
Full sample, Rep.\ Party & $+6.20$ & $+0.17$ & $+6.03$ & 2.7\% & 97.3\% \\
Young Dems, Dem.\ Party & $+33.06$ & $+11.49$ & $+21.57$ & 34.8\% & 65.2\% \\
Older Reps, Rep.\ Party & $+28.81$ & $+14.26$ & $+14.55$ & 49.5\% & 50.5\% \\
\bottomrule
\end{tabular}
\end{table}

The full-sample analyses confirm near-zero covariate shift ($< 3\%$), consistent with the matched-demographics design in \citet{bisbee2024synthetic}. The subsampled analyses show meaningful covariate shift (35--50\%) as expected when restricting the target to demographic subsets.

\subsection{Propensity Estimation Method Comparison}

\begin{table}[htbp]
\centering
\caption{ANES estimator comparison: logistic propensity with linear vs.\ random forest outcome models.}
\label{tab:anes_methods}
\small
\begin{tabular}{l l c c c}
\toprule
\textbf{Thermometer} & \textbf{Outcome Model} & $|\textbf{Bias}_{\text{DR}}|$ & \textbf{Bias Red.} & \textbf{CI} \\
\midrule
Dem.\ Party & Linear & 0.029 & 99.5\% & $\checkmark$ \\
Dem.\ Party & Random forest & 0.156 & 97.4\% & $\checkmark$ \\
Rep.\ Party & Linear & 0.022 & 99.6\% & $\checkmark$ \\
Rep.\ Party & Random forest & 0.263 & 95.8\% & $\checkmark$ \\
\bottomrule
\end{tabular}
\end{table}

Linear outcome models outperform random forest on this data, likely because the full-sample analysis has near-perfect covariate overlap (ESS/N $= 0.95$) and the concept shift is approximately linear in demographics. Random forest introduces unnecessary variance through overfitting. Both methods achieve $> 95\%$ bias reduction and valid CI coverage.

\subsection{Three-Model Cross-LLM Stability}

Extending the cross-LLM stability analysis to include GPT-3.5 (in addition to GPT-4 and Falcon-40B), all pairwise Spearman correlations equal 1.0 for both the Democratic Party and Republican Party thermometers, aggregated at the party-subgroup level (3 subgroups: Democrat, Independent, Republican).

\begin{table}[htbp]
\centering
\caption{Cross-LLM subgroup estimates: feeling thermometer toward Republican Party.}
\label{tab:anes_cross_llm_rep}
\small
\begin{tabular}{l c c c c}
\toprule
\textbf{Subgroup} & \textbf{Human (ANES)} & \textbf{GPT-3.5} & \textbf{GPT-4} & \textbf{Falcon-40B} \\
\midrule
Democrat       & 24.0 & 19.3 & 36.2 & 25.0 \\
Independent    & 40.6 & 46.5 & 53.2 & 32.7 \\
Republican     & 66.1 & 86.2 & 81.9 & 54.4 \\
\bottomrule
\end{tabular}
\end{table}

\noindent Despite identical rankings, the models exhibit markedly different concept shift patterns. GPT-3.5 and GPT-4 show positive bias for Republicans (overestimating by 15--20~pp) but opposite biases for Democrats (GPT-3.5 underestimates by 5~pp, GPT-4 overestimates by 12~pp). Falcon-40B underestimates uniformly across all subgroups. These heterogeneous biases underscore the importance of model-specific calibration.

\section{Cross-Fitting Implementation Details}
\label{app:crossfitting}

This appendix provides implementation details for the $K$-fold cross-fitting procedure.

\subsection{Nuisance Estimator Choices}

\paragraph{Outcome model $\hat{g}$.} The concept shift function $g(x) = \muT(x) - \muS(x)$ can be estimated by:
\begin{enumerate}[label=(\alph*)]
  \item \emph{Difference of regressions:} Fit $\hat{\mu}_{\cT}$ on the real sample and $\hat{\mu}_{\cS}$ on the synthetic sample separately; take $\hat{g}(x) = \hat{\mu}_{\cT}(x) - \hat{\mu}_{\cS}(x)$.
  \item \emph{Direct regression:} Pool data with source indicator $D$, fit $\hat{g}(x) = \what{\E}[Y \mid X = x, D = \cT] - \what{\E}[Y \mid X = x, D = \cS]$.
  \item \emph{Parametric:} For small $n$, use $\hat{g}(x) = \alpha + \beta^\top x$ (linear calibration).
\end{enumerate}

We use random forest for the rich CRM scenario, linear regression for survey-only, and skip outcome calibration for aggregate benchmarks.

\paragraph{Propensity model $\hat{w}$.} The density ratio $w(x) = p_{\cT}(x) / p_{\cS}(x)$ is estimated via a domain classifier:
\begin{equation}
  \hat{w}(x) = \frac{1 - \hat{\pi}(x)}{\hat{\pi}(x)} \cdot \frac{N}{n},
  \label{eq:propensity_model}
\end{equation}
where $\hat{\pi}(x) = \what{P}(\text{synthetic} \mid X = x)$ is fitted by logistic regression or random forest on the pooled sample.

\subsection{Weight Trimming}

Extreme propensity weights can destabilize the DR estimator. We trim at the 1st and 99th percentiles:
\begin{equation}
  \tilde{w}_i = \min\bigl(\max(w_i, w_{0.01}), w_{0.99}\bigr),
  \label{eq:trimming}
\end{equation}
where $w_{q}$ denotes the $q$-th quantile of the empirical weight distribution. This introduces a small bias but substantially reduces variance.

\subsection{Comparison of Propensity Estimation Methods}

We compare logistic regression and random forest propensity estimation across DGP scenarios and ANES data. On the ANES (full-sample, near-perfect overlap with ESS/N $= 0.95$), logistic regression achieves slightly lower DR bias than random forest for both thermometers (0.029 vs.\ 0.156 for Democratic Party; 0.022 vs.\ 0.263 for Republican Party). Both achieve valid CI coverage.

On the DGP, we observe that logistic propensity with linear outcome models performs best for mild covariate shift (where the true outcome function is approximately linear in covariates) and remains competitive for moderate covariate shift. Random forest propensity models introduce additional variance without meaningfully improving bias correction when the true propensity surface is log-linear.

Based on these results, we recommend logistic regression as the default propensity estimator, with random forest as a robustness check when the researcher suspects complex nonlinear relationships between covariates and domain membership. Entropy balancing \citep{hainmueller2012entropy} is a promising alternative that directly targets covariate balance rather than estimating the propensity score, but is not implemented in the current codebase and is deferred to future work.

\section{Worked Example: D2C Sustainable Products}
\label{app:worked_example}

Consider a direct-to-consumer company selling sustainable household products. The company wants to estimate purchase intent for a new biodegradable cleaning product across customer segments.

\textbf{Step 1: Covariate Overlap.} The company's CRM shows its customer base skews female (65\%), urban (72\%), and higher-income (median \$85K). A domain classifier trained on CRM demographics vs.\ GPT-4o--generated profiles yields $\ESS/N = 0.15$ (adequate overlap). \emph{Proceed.}

\textbf{Step 2: Cross-LLM Stability.} The company generates synthetic panels from GPT-4o, Claude, and Gemini using identical demographic profiles and purchase intent questions. Pairwise rank correlations of segment-level purchase intent:
\begin{center}
\small
\begin{tabular}{lccc}
\toprule
& GPT-4o & Claude & Gemini \\
\midrule
GPT-4o & 1.00 & 0.87 & 0.82 \\
Claude & & 1.00 & 0.79 \\
Gemini & & & 1.00 \\
\bottomrule
\end{tabular}
\end{center}
$\min(\rho_{jk}) = 0.79 \in [0.50, 0.80]$. \emph{Partial agreement---flag concept shift risk. Proceed to Step~3.}

\textbf{Step 3: Conditional Calibration.} The company runs a quick 200-person survey ($n = 200$) of actual customers. Calibration slopes by segment:
\begin{center}
\small
\begin{tabular}{lc}
\toprule
Segment & $\beta_g$ \\
\midrule
Female, 25--34, urban & 1.12 \\
Male, 35--44, suburban & 0.73 \\
Female, 55+, rural & 0.48 \\
\bottomrule
\end{tabular}
\end{center}
The 55+ rural female segment has $\beta_g = 0.48 < 0.5$. \emph{Walk Away for that subgroup; Correct for the others.}

\textbf{Action.} Apply DR correction for the 25--34 urban and 35--44 suburban segments. For the 55+ rural segment, rely on the real-data-only estimate ($n = 200$ subgroup size) or collect additional real data for that segment.

\section{Consumer Estimand Preview}
\label{app:consumer_preview}

The DR correction is not limited to population means. We preview three consumer-specific estimands:

\begin{enumerate}[label=\arabic*.]
  \item \textbf{Product concept ranking.} Apply DR separately to each of $K$ concepts, obtaining $\hat{\theta}_k$ with confidence interval $\mathrm{CI}_k$. This is the correct approach because the LLM's bias varies across concepts---it may systematically favor ``innovative'' over ``incremental'' concepts \citep{brand2025llmsmarketresearch}.

  \item \textbf{Segment-level treatment effects.} DR-CATE combining the synthetic panel's broad coverage with real-sample calibration within each segment. Critical because global correction demonstrably worsens subgroup estimates---\citet{krsteski2025validsurvey} show PPI++ at the population level \emph{increases} female bias by 58.5\%. Our segment-aware diagnostics (Diagnostic~2) route to segment-specific correction.

  \item \textbf{WTP / demand curves.} DR estimator for price-response functions anchored by real purchase data. The LLM's absolute WTP predictions are poorly calibrated---\citet{brand2025llmsmarketresearch} find $3\times$ overestimates---even when relative rankings are reasonable. The outcome model $\hat{g}$ corrects the level while preserving shape.
\end{enumerate}

\section{Data Availability Scenarios}
\label{app:scenarios}

The DR correction adapts to three practical scenarios depending on what data the firm has:

\begin{table}[htbp]
\centering
\caption{DR correction adapts to three data availability scenarios.}
\label{tab:scenarios_appendix}
\small
\begin{tabularx}{\textwidth}{>{\raggedright\arraybackslash}p{2.8cm} >{\raggedright\arraybackslash}p{3cm} >{\raggedright\arraybackslash}p{4cm} >{\raggedright\arraybackslash}X}
\toprule
\textbf{Scenario} & \textbf{Firm Has} & \textbf{Outcome Model $\hat{g}$} & \textbf{Propensity Model $\hat{w}$} \\
\midrule
Rich CRM + small survey & Demographics + purchase history for all customers, plus $n \approx 200$--$300$ survey responses & Flexible ML (random forest, LASSO) with cross-fitting & Logistic regression or random forest domain classifier \\[6pt]
Survey only & Demographics + outcomes for $n \approx 50$--$100$ respondents & Parametric calibration: $\hat{g}(x) = \alpha + \beta^\top x$ (bias-variance tradeoff with small $n$) & Logistic regression domain classifier \\[6pt]
Aggregate benchmarks only & Population means from census, CRM, or industry reports---no individual-level outcomes & No outcome calibration (IPW-only) & Entropy balancing \citep{hainmueller2012entropy}: directly enforces covariate moment balance \\
\bottomrule
\end{tabularx}
\end{table}

\paragraph{On entropy balancing.} In the aggregate-benchmarks-only scenario, the firm knows its customer demographics (age/gender/income distributions from CRM data) but has no individual-level survey responses. Entropy balancing \citep{hainmueller2012entropy} provides the cleanest reweighting: it solves a strictly convex dual with a unique global optimum and avoids propensity score estimation entirely. In this scenario, the decision framework can run covariate overlap and cross-LLM stability diagnostics (Steps~1--2), but conditional calibration is unavailable. The firm must decide between the reweighted estimate and collecting real data.

\section{Code Documentation and Reproduction}
\label{app:code}

All code is implemented in Python as the \texttt{synthetic\_users} package. The package requires Python~3.9+ with NumPy, SciPy, scikit-learn, and pandas.

\subsection{Package API}

The package exports six main functions:

\begin{itemize}[leftmargin=2em]
  \item \texttt{bias\_decomposition(X\_s, Y\_s, X\_t, Y\_t)} $\to$ \texttt{DecompositionResult}: Implements the DISDE decomposition (Section~3 of the main text), returning total bias, covariate shift, concept shift, and diagnostic fractions.
  \item \texttt{covariate\_overlap(X\_s, X\_t)} $\to$ \texttt{DiagnosticResult}: Diagnostic~1 (ESS/N) with recommendation.
  \item \texttt{conditional\_calibration(Y\_real, Y\_synth, groups)} $\to$ \texttt{DiagnosticResult}: Diagnostic~2 ($\beta_g$) with sign-flip detection.
  \item \texttt{cross\_llm\_stability(estimates)} $\to$ \texttt{DiagnosticResult}: Diagnostic~3 ($\min \rho_{jk}$).
  \item \texttt{dr\_correction(X\_s, Y\_s, X\_t, Y\_t, K=5)} $\to$ \texttt{CorrectionResult}: Implements the cross-fitted DR estimator (Algorithm~1 in the main text), returning DR estimate, SE, CI, and comparison estimators.
  \item \texttt{run\_decision\_framework(...)} $\to$ \texttt{FrameworkResult}: Orchestrates all diagnostics and returns Trust/Correct/Walk~Away recommendation.
\end{itemize}

All functions accept NumPy arrays: $X \in \mathbb{R}^{n \times d}$ for demographics, $Y \in \mathbb{R}^n$ for outcomes.

\subsection{Reproduction Guide}

\begin{enumerate}
  \item Install dependencies: \texttt{pip install numpy scipy scikit-learn pandas}
  \item Clone the repository containing \texttt{src/}, \texttt{experiments/}, and \texttt{data/}
  \item Download ANES data from \citet{bisbee2024synthetic} replication package
  \item Run DGP simulations: \texttt{python experiments/run\_dgp\_expanded.py}
  \item Run ANES analysis: \texttt{python experiments/run\_anes\_expanded.py}
  \item Run test suite: \texttt{pytest tests/ -v}
\end{enumerate}

\subsection{Runtime}

On a standard laptop (Apple M4, 16~GB RAM): DGP simulation with 20 replications $\times$ 9 scenarios runs in $\sim$2 minutes. The sample size sweep (4 sizes $\times$ 6 scenarios $\times$ 20 replications) adds $\sim$5 minutes. ANES analysis runs in $\sim$30 seconds. The full test suite (100 tests) completes in $\sim$4 seconds.

\section{Formal DISDE Decomposition and Assumptions}
\label{app:formal_decomposition}

This appendix provides the formal telescoping decomposition and regularity assumptions summarized in Section~3 of the main text.

\subsection{Formal Decomposition via DISDE}

Following the DISDE telescoping sum of \citet[Equation 2.1]{cai2025disde}, we introduce a shared intermediate distribution to separate covariate and concept shift cleanly.

\begin{definition}[Shared Distribution]
\label{app:def:shared_distribution}
Define the shared distribution $\Sx$ with density
\begin{equation}
  p_{\Sx}(x) \propto \frac{p_{\cS}(x) \cdot p_{\cT}(x)}{p_{\cS}(x) + p_{\cT}(x)},
  \label{app:eq:shared_density}
\end{equation}
which places mass only where both $\pS$ and $\pT$ have support.
\end{definition}

Define the conditional risk under each population:
\begin{equation}
  R_{\cS}(x) \coloneqq \muS(x) = \E_{\cS}[Y \mid X = x], \qquad
  R_{\cT}(x) \coloneqq \muT(x) = \E_{\cT}[Y \mid X = x].
  \label{app:eq:conditional_risk}
\end{equation}

The DISDE telescoping decomposition gives:
\begin{align}
  \tauT - \tauS &= \bigl\{\E_{\Sx}[R_{\cS}(X)] - \E_{\pS}[R_{\cS}(X)]\bigr\} \label{app:eq:disde_term1} \\
  &\quad + \bigl\{\E_{\Sx}[R_{\cT}(X) - R_{\cS}(X)]\bigr\} \label{app:eq:disde_term2} \\
  &\quad + \bigl\{\E_{\pT}[R_{\cT}(X)] - \E_{\Sx}[R_{\cT}(X)]\bigr\}. \label{app:eq:disde_term3}
\end{align}

\noindent Terms \eqref{app:eq:disde_term1} and \eqref{app:eq:disde_term3} capture covariate shift (the change in marginal $X$ distribution from $\pS$ to $\Sx$ and from $\Sx$ to $\pT$, respectively), while term \eqref{app:eq:disde_term2} isolates concept shift (the change in conditional response $R_{\cT}(x) - R_{\cS}(x)$, evaluated under the shared distribution where both populations have adequate support).

\begin{remark}[Interpretation for Practitioners]
Term \eqref{app:eq:disde_term2} measures how differently the LLM ``thinks'' compared to real consumers \emph{at the same demographics}. If this term dominates, reweighting the LLM panel to match the target demographics will not fix the problem. The practitioner needs outcome model calibration (our DR correction) or must walk away.
\end{remark}

\subsection{Assumptions}

\begin{assumption}[Shared Support]
\label{app:ass:shared_support}
For all $x \in \cX$, if $p_{\cT}(x) > 0$, then $p_{\cS}(x) > 0$. That is, the LLM's implicit population covers the target population's demographic profiles: $\mathrm{supp}(\pT) \subseteq \mathrm{supp}(\pS)$.
\end{assumption}

\begin{assumption}[Bounded Density Ratios]
\label{app:ass:bounded_ratios}
There exists a constant $M < \infty$ such that
\begin{equation}
  \sup_{x \in \cX} \frac{p_{\cT}(x)}{p_{\cS}(x)} \leq M.
  \label{app:eq:bounded_density}
\end{equation}
\end{assumption}

\begin{assumption}[Regularity]
\label{app:ass:regularity}
$\E_{\cS}[Y^2] < \infty$ and $\E_{\cT}[Y^2] < \infty$. The conditional response functions $\muS(\cdot)$ and $\muT(\cdot)$ are measurable and square-integrable under their respective distributions.
\end{assumption}

\begin{assumption}[Independent Generation]
\label{app:ass:independence}
Synthetic observations $(X_i^{\cS}, Y_i^{\cS})_{i=1}^N$ are generated independently, each from a single API call with no shared context across calls. Real observations $(X_j^{\cT}, Y_j^{\cT})_{j=1}^n$ are independent draws from $P_{\cT}$.
\end{assumption}

Assumption~\ref{app:ass:shared_support} is testable via Diagnostic~1 (covariate overlap). Assumption~\ref{app:ass:bounded_ratios} ensures importance weights are well-behaved; violations manifest as extreme weights detected by low effective sample size. Assumption~\ref{app:ass:regularity} is standard. Assumption~\ref{app:ass:independence} is satisfied when each synthetic respondent is generated in an isolated API call with temperature $> 0$. It is violated under batch prompting with shared context windows, temperature $= 0$, or prompt designs that carry response history across calls.

\paragraph{Estimation.} The decomposition is estimated via a domain classifier that pools synthetic and real samples, trains $\hat{\pi}(x) = \what{P}(\text{synthetic} \mid X = x)$, and converts to importance weights $w_i \propto (1 - \hat{\pi}(X_i^{\cS})) / \hat{\pi}(X_i^{\cS})$. Shared-distribution expectations are computed as importance-weighted averages using \eqref{app:eq:shared_density}, yielding separate estimates of covariate and concept shift. The real sample $n$ can be much smaller than the synthetic $N$ ($n = 50$--$300$ suffices for $N = 5{,}000$--$10{,}000$).

\section{Cross-Dataset Comparison}
\label{app:cross_dataset}

This appendix provides the full cross-dataset comparison summarized in Section~7.4 of the main text.

\subsection{Diagnostic Performance Summary}

\begin{table}[htbp]
\centering
\small
\caption{Cross-dataset comparison of framework performance.}
\label{app:tab:cross_dataset}
\begin{tabular}{l c c c c c}
\toprule
\textbf{Dataset} & \textbf{Outcome} & \textbf{ESS/N} & \textbf{Bias$_N$} & \textbf{BR} & \textbf{CI covers} \\
\midrule
ANES (Young Dem.) & Dem therm (0--100) & 0.110 & 33.1 pp & 99.5\% & \checkmark \\
ANES (Older Rep.) & Rep therm (0--100) & 0.260 & 28.8 pp & 92.9\% & \checkmark \\
Twin-2K-500 & Purchase rate (full) & 1.000 & 0.010 & 99.2\% & \checkmark \\
Twin-2K-500 & High-income target & 0.401 & 0.019 & 83.3\% & \checkmark \\
Twin-2K-500 & Young consumers & 0.203 & 0.040 & 94.0\% & \checkmark \\
\bottomrule
\end{tabular}
\end{table}

\subsection{Cross-Cutting Findings}

\textbf{1.\ The framework correctly differentiates Trust from Correct across the fidelity spectrum.} ANES represents a ``known failure'' setting---48\% coefficient divergence, 32\% sign-flips \citep{bisbee2024synthetic}---where diagnostics correctly flag Walk Away or Correct. Twin-2K-500 full-sample estimation represents a ``high-fidelity'' setting---83.9\% agreement, calibration slopes $\beta_g \in [0.80, 0.93]$---where the framework correctly validates Trust. The same dataset shifts to Correct when the target population differs from the source demographics (e.g., high-income targeting, $\beta_g \in [0.21, 0.28]$). This demonstrates that the diagnostics are sensitive enough to distinguish between scenarios that require correction and those that do not.

\textbf{2.\ Concept shift severity is task-specific, not purely demographic.} In ANES, calibration slopes vary sharply by age: Falcon-40B shows acute failures for older adults. In Twin-2K-500, calibration slopes are uniformly good across all demographic groups ($\beta_g \in [0.80, 0.93]$). The contrast suggests that concept shift is driven by the interaction between task type (political attitudes vs.\ consumer purchases) and LLM architecture, rather than by demographics alone. GPT-4.1-mini, trained with 500+ prior responses per person, achieves good calibration on consumer purchase decisions but fails on behavioral economics tasks requiring cognitive biases (conjunction fallacy $r = 0.39$, anchoring $r = 0.34$--$0.50$). This nuance matters for practitioners: the same LLM can be trustworthy for one estimand and unreliable for another.

\textbf{3.\ Concept shift dominates covariate shift when demographics are matched.} In both applications with matched demographics, the bias decomposition reveals that concept shift accounts for the majority of total error: 96.8--97.3\% for ANES, and effectively 100\% for Twin-2K-500 full sample ($\ESS/N = 1.0$). However, when the target population differs (Twin-2K-500 subgroup targeting), covariate shift becomes substantial---accounting for 56\% of total error for high-income targeting. This highlights that both shift components matter in practice: reweighting alone (IPW) is insufficient when concept shift dominates, but ignoring covariate shift when targeting specific demographics equally leads to bias.

\section{Sample Size Sensitivity}
\label{app:sample_size}

We assess how the DR correction degrades as the calibration sample shrinks. For each target sample size $n \in \{25, 50, 100, 200, 500, 1000\}$, we draw 20 random subsamples from the Twin-2K-500 dataset and compute CI coverage:

\begin{table}[htbp]
\centering
\caption{Twin-2K-500: CI coverage by calibration sample size (20 replications each).}
\label{app:tab:sample_size}
\small
\begin{tabular}{l c c c c c c}
\toprule
$n_{\text{target}}$ & 25 & 50 & 100 & 200 & 500 & 1,000 \\
\midrule
CI Coverage & 100\% & 100\% & 100\% & 100\% & 100\% & 100\% \\
\bottomrule
\end{tabular}
\end{table}

\noindent Coverage remains at 100\% across all sample sizes, including $n = 25$. The SE estimation procedure is conservative---confidence intervals widen appropriately as $n$ shrinks, maintaining coverage. This result is practically important: it suggests that firms can run reliable diagnostics with very small calibration samples when the synthetic panel provides good demographic coverage.


\end{document}